\newtheorem{theorem}{Theorem}
\newtheorem{lemma}[theorem]{Lemma}
\newtheorem{property}[theorem]{Property}
\newtheorem{claim}[theorem]{Claim}
\newcommand{\desc}{\ensuremath{\mathrm{desc}}}
\newcommand{\poly}{\ensuremath{\mathrm{poly}}}
\newcommand{\OPT}{\ensuremath{\mathrm{OPT}\xspace}}
\newcommand{\BMC}{{BMC}\xspace}
\newcommand{\DST}{{DST}\xspace}
\begin{document}

\title{The Knapsack Problem with Neighbour Constraints}

\author{Glencora Borradaile\footnote{Supported
by NSF grant CCF-0963921.}\\Oregon State University\\{\tt
glencora@eecs.oregonstate.edu} \and Brent
Heeringa\footnote{Supported by NSF grant IIS-0812514}\\Williams College\\{\tt heeringa@cs.williams.edu} \and
Gordon Wilfong\\Bell Labs\\ {\tt
gtw@research.bell-labs.com}}

\maketitle

\begin{abstract}
We study a constrained version of the knapsack problem in which
dependencies between items are given by the adjacencies of a graph.  In the {\em 1-neighbour knapsack problem}, an item can be selected only if at least one of its neighbours is also
selected.  In the {\em all-neighbours knapsack problem}, an item can be selected only if all its neighbours are also selected.

We give approximation algorithms and
hardness results when the nodes have both uniform and arbitrary
weight and profit functions, and when the dependency graph is
directed and undirected.
\end{abstract}

\newpage

\section{Introduction}

We consider the knapsack problem in the presence of constraints.  The input is a graph $G = (V,E)$ where each vertex $v$ has a {\em weight} $w(v)$ and a {\em profit} $p(v)$, and a knapsack of size $k$.  We start with the usual knapsack goal---find a set of vertices of maximum profit whose total weight does not exceed $k$---but consider two natural variations.  In the {\em 1-neighbour knapsack problem}, a vertex can be selected only if {\em at least one} of its neighbours is also selected (vertices with no neighbours can always be selected).  In the {\em all-neighbour knapsack problem} a vertex can be selected only if all its neighbours are also selected.

We consider the problem with {\em general}
(arbitrary) and {\em uniform} ($p(v) = w(v) = 1\ \forall v$) weights
and profits, and with undirected and directed graphs.  In the case of
directed graphs, the constraints only apply to the {\em out}-neighbours of a vertex.

Constrained knapsack problems have applications to scheduling, tool
management, investment strategies and database
storage~\cite{KPP,BFFS05,Johnson:1983p1256}.  There are also
applications to network formation.  For example, suppose a set of
customers $C \subset V$ in a network $G = (V,E)$ wish to connect to a
server, represented by a single sink $s \in V$.  The server may
activate each edge at a cost and each customer would result in a
certain profit.  The server wishes to activate a subset of the edges
with cost within the server's budget. By introducing a vertex mid-edge
with zero-profit and weight equal to the cost of the edge and giving
each customer zero-weight, we convert this problem to a 1-neighbour
knapsack problem.

\subsection{Results}

We show that the eight resulting problems
\[
\{\mbox{1-neighbour, all-neighbours}\} \times \{\mbox{general, uniform}\}\times\{\mbox{undirected, directed}\}
\]
vary in complexity but afford several algorithmic approaches.  We
summarize our results for the 1-neighbour knapsack problem in Table~\ref{tbl:results}.  In addition, we show that uniform, directed
all-neighbour knapsack has a PTAS but is NP-complete.  The general,
undirected all-neighbour knapsack problem reduces to 0-1 knapsack, so there is a fully-polynomial time approximation scheme.

\begin{table}[tb]
\begin{center}
\begin{tabular}{cccc}
\toprule

&  & Upper & Lower \\
\cmidrule(r){2-4}
\multirow{2}{*}{\hspace{3mm}Uniform\hspace{3mm}} & \hspace{3mm}Undirected\hspace{3mm} & \multicolumn{2}{c}{linear-time exact}  \\
\cmidrule(r){2-4}
& Directed & PTAS & \hspace{3mm}NP-hard (strong sense) \hspace{3mm} \\
\cmidrule(r){2-4}
\multirow{2}{*}{General} & Undirected & \hspace{3mm}$\frac{(1-\varepsilon)}{2} \cdot (1-1/e^{1-\varepsilon})$\hspace{3mm} &   $1-1/e+\epsilon$ \\
\cmidrule(r){2-4}
& Directed & {\em open} & $1/ \Omega(\log^{1-\varepsilon} n)$ \\
\bottomrule
\end{tabular}
\end{center}
\caption{\label{tbl:results} 1-Neighbour Knapsack Problem results:  upper and lower bounds on the approximation ratios for combinations of $\{\mbox{general, uniform}\} \times \{\mbox{undirected, directed}\}$.  For uniform, undirected, the bounds are running-times of optimal algorithms.}

\end{table}

In Section~\ref{sec:g1n} we describe a greedy algorithm that applies to the general 1-neighbour problem for both directed and undirected dependency graphs.  The algorithm requires two oracles: one for finding a set of vertices with high profit and another for finding a set of vertices with high profit-to-weight ratio.  In both cases, the total weight of the set cannot exceed the knapsack capacity and the subgraph defined by the vertices must adhere to a strict combinatorial structure which we define later.   The algorithm achieves an approximation ratio of $(\alpha/2) \cdot (1-1/e^{\beta})$.  The approximation ratios of the oracles determines the $\alpha$ and $\beta$ terms respectively.

For the general, undirected 1-neighbour case, we give polynomial-time oracles that achieve $\alpha = \beta = (1-\varepsilon)$ for any $\varepsilon > 0$.  This yields a polynomial time $((1-\varepsilon)/2) \cdot (1-1/e^{1-\varepsilon})$-approximation.  We also show that no approximation ratio better than $1-1/e$ is possible (assuming P$\not=$NP).  This matches the upper bound up to (almost) a factor of 2.  These results appear in Section~\ref{sec:gu1n}.

In Section~\ref{sec:gd1n}, we show that the general, directed 1-neighbour knapsack problem is $1/ \Omega(\log^{1-\varepsilon} n)$-hard to approximate, even in DAGs.  

In Section~\ref{sec:ud1n} we show that the uniform, directed
1-neighbour knapsack problem is NP-hard in the strong sense but that it has a polynomial-time
approximation scheme (PTAS)\footnote{A PTAS is an algorithm that,
given a fixed constant $\varepsilon < 1$, runs in polynomial time and
returns a solution within $1-\varepsilon$ of optimal.  The algorithm
may be exponential in $1/\varepsilon$}.  Thus, as with general, undirected 1-neighbour problem, our upper and lower bounds are essentially matching.

In Section~\ref{sec:uu1n} we show that the uniform, undirected
1-neighbour knapsack problem affords a simple, linear-time solution.

In Section~\ref{sec:all-neighbours} we show that uniform, directed
all-neighbour knapsack has a PTAS but is NP-complete.  We also discuss the general, undirected all-neighbour problem.

\subsection{Related work} \label{sec:related}

There is a tremendous amount of work on maximizing submodular functions under a single knapsack constraint~\cite{Sviridenko:orl2004}, multiple knapsack constraints~\cite{Kulik:2009}, and both knapsack and matroid constraints~\cite{Lee:2009,groundan-schulz:prepreint2009}.  While our profit function is submodular, the constraints given by the graph are not characterized by a matroid (our solutions, for example, are not closed downward).  Thus, the 1-neighbour knapsack problem represents a class of knapsack problems with realistic constraints that are not captured by previous work.

As we show in Section~\ref{sec:apx-hardness}, the general, undirected 1-neighbour knapsack problem generalizes several maximum coverage problems including the budgeted variant considered by Khuller, Moss, and Naor~\cite{kmn:ipl1999} which has a tight $(1-1/e)$-approximation unless P=NP.  Our algorithm for the general 1-neighbour problem follows the approach taken by Khuller, Moss, and Naor but, because of the dependency graph, requires several new technical ideas.  In particular, our analysis of the greedy step represents a non-trivial generalization of the standard greedy algorithm for submodular maximization.

Johnson and Niemi~\cite{Johnson:1983p1256} give an FPTAS for knapsack
problems on dependency graphs that are in-arborescences (these are
directed trees in which every arc is directed toward a single root).
In their problem formulation, the constraints are given as
out-arborescences---directed trees in which every arc is directed away
from a single root---and feasible solutions are subsets of vertices
that are closed under the {\em predecessor} operation. This problem
can be viewed as an instance of the general, directed 1-neighbour
knapsack problem.

In the subset-union knapsack problem (SUKP)~\cite{KPP}, each item is a subset
of a ground set of elements.  Each element in the ground set has a
weight and each item has a profit and the goal is to find a
maximum-profit set of elements where the weight of the union of the
elements in the sets fits in the knapsack.  It is easy to see that
this is a special case of the general, directed all-neighbours
knapsack problem in which there is a vertex for each item and each
element and an arc from an item to each element in the item's set.
In~\cite{KPP}, Kellerer, Pferschy, and Pisinger show that SUKP is
NP-hard and give an optimal but badly exponential algorithm.  The
precedence constrained knapsack problem~\cite{BFFS05} and
partially-ordered knapsack problem~\cite{Kolliopoulos:2007p1242} are
special cases of the general, directed all-neighbours knapsack problem
in which the dependency graph is a DAG.  Hajiaghayi et.~al.~show that
the partially-ordered knapsack problem is hard to approximate within a
$2^{\log^\delta n}$ factor unless
3SAT$\in$DTIME$(2^{n^{3/4+\varepsilon}})$~\cite{Hajiaghayi:2006p1244}.

\subsection{Notation.}

We consider graphs $G$ with $n$ vertices $V(G)$ and $m$ edges $E(G)$.
Whether the graph is directed or undirected will be clear from
context and we refer to edges of directed graphs as arcs.
For an undirected graph, $N_G(v)$ denotes the neighbours of a vertex $v$ in
$G$.  For a directed graph, $N_G(v)$ denotes the out-neighbours of $v$
in $G$, or, more formally, $N_G(v) = \{u : vu \in E(G)\}$.  Given a set of nodes $X$, $N^{-}_{G}(X)$ is the set of nodes not in $X$ but that have a neighbour (or out-neighbour in the directed case) in $X$.  That is, $N^{-}_{G}(X)=\{u : uv \in E(G), u \not\in X, \mbox{ and } v \in X\}$.
The degree (in undirected graphs) and out-degree (in directed graphs) of a vertex $v$
in $G$ is denoted $\delta_G(v)$.  The subscript $G$ will be dropped
when the graph is clear from context.  For a set of vertices {\em or}
edges $U$, $G[U]$ is the graph induced on $U$.

For a directed graph $G$, $\mathcal{D}$ is the directed, acyclic graph (DAG)
resulting from contracting maximal strongly-connected components
(SCCs) of $G$.  For each node $u \in V(\mathcal{D})$, let $V(u)$ be the set of
vertices of $G$ that are contracted to obtain $u$.

For a vertex $u$,
let $\desc_G(u)$ be the set of all descendants of $u$ in $G$, {\em i.e.},~all
the vertices in $G$ that are reachable from $u$ (including $u$).  A
vertex is its own descendant, but not its own
strict descendant.

For convenience, extend any function $f$ defined on items in a set $X$
to any subset $A \subseteq X$ by letting $f(A) = \sum_{a \in A} f(a)$.
If $f(a)$ is a set, then $f(A) = \bigcup_{a\in A} f(a)$.  If $f$ is
defined over vertices, then we extend it to edges: $f(E) =
f(V(E))$.  For any knapsack problem, \OPT~is the set of
vertices/items in an optimal solution.

\subsection{Viable Families and Viable Sets.}

A set of nodes $U$ is a {\em 1-neighbour set} for $G$ if for every vertex $v \in U$, $|N_{G[U]}(v)| \geq \min\{\delta_{G}(v),1\}$.  That is, a 1-neighbour set is feasible with respect to the dependency graph.  A family of graphs $\mathcal{H}$ is a {\em viable family} for $G$ if, for any subgraph $G'$ of $G$, there exists a partition $\mathcal{Y}_{\mathcal{H}}(G')$ of $G'$ into 1-neighbour sets for $G'$, such that for every $Y \in \mathcal{Y}_{\mathcal{H}}(G')$, there is a graph $H \in \mathcal{H}$ spanning $G[Y]$.  For directed graphs, we take {\em spanning} to mean that $H$ is a directed subgraph of $G[Y]$ and that $Y$ and $H$ contain the same number of nodes.  For a graph $G$,  we call $\mathcal{Y}_{\mathcal{H}}(G)$ a {\em viable partition} of $G$ with respect to $\mathcal{H}$.

\begin{figure}[bt]
\centering\includegraphics[scale=0.5]{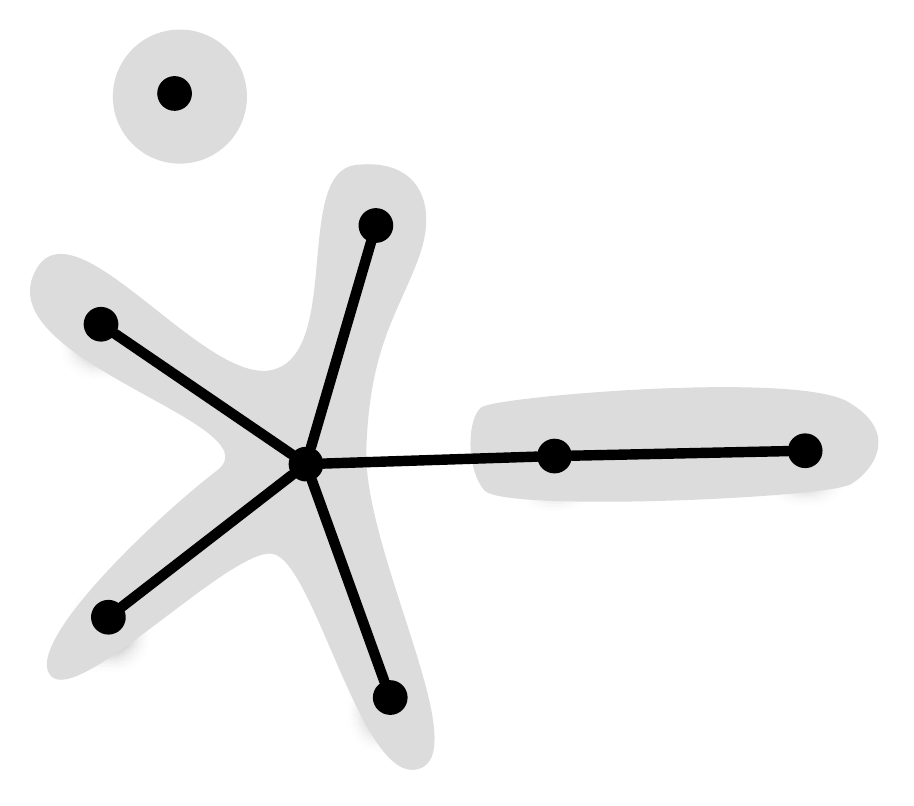}
\caption{ \label{fig:viable-partition}  An undirected graph.   If $\mathcal{H}$ is the family of star graphs, then the shaded regions give the only viable partition of the nodes---no other partition yields 1-neighbour sets.  However, every {\em edge} viable with respect to $\mathcal{H}$.  The singleton node is also viable since it is a 1-neighbour set for the graph.}
\end{figure}

\begin{figure}[tb]
\centering
\subfigure[]{\includegraphics[scale=0.5]{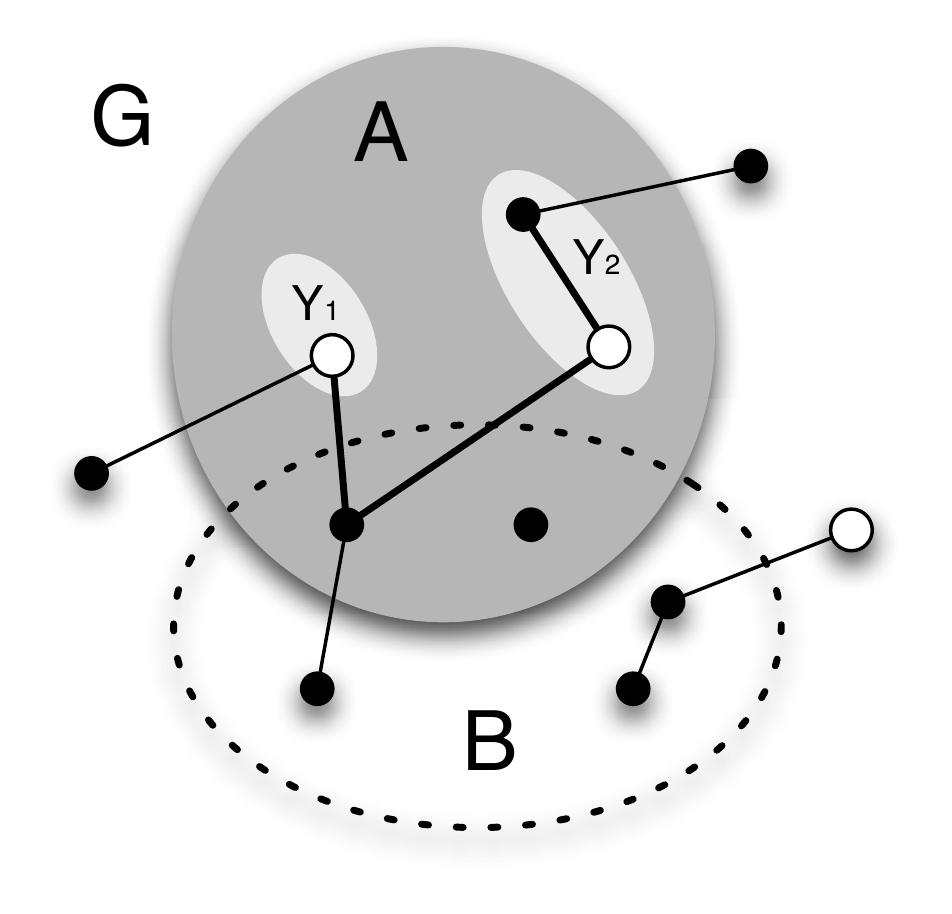}}
\subfigure[]{\includegraphics[scale=0.5]{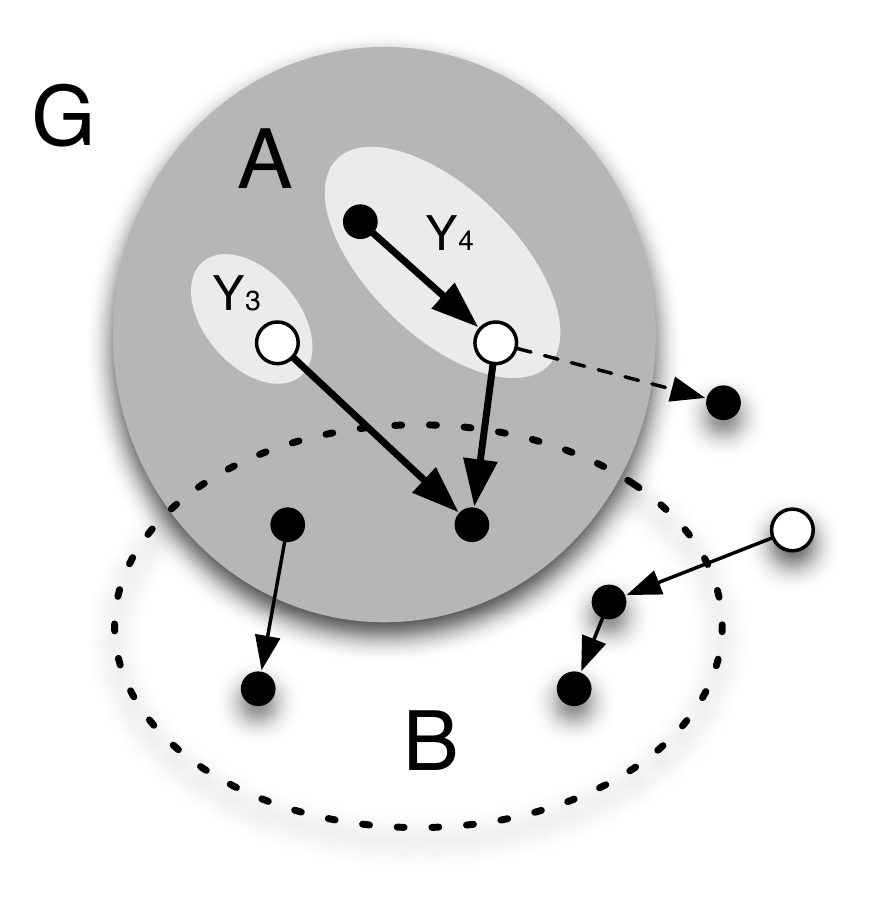}}
\caption{ \label{fig:viable-partition-lemma}  An undirected $G$ in (a) and a directed graph $G$ in (b) with 1-neighbour sets $A$ (dark shaded) and $B$ (dotted) marked in both.  Similarly, in both (a) and (b) the lightly shaded regions give viable partitions for $G[A \setminus B]$ and the white nodes denote $N^{-}_{G}(B)$.  In (a) $Y_{2}$ is viable for $G[A \setminus B]$, and since $|Y_{2}|=2$, it is viable for $G[V(G) \setminus B]$.  $Y_{1}$ is not viable for $G[V(G) \setminus B]$ but it is in $N^{-}_{G}(B)$.  In (b), $Y_{3}$ is viable in $G[V(G) \setminus B]$ whereas $Y_{4}$ is a viable because we consider $G[V(G) \setminus B]$ with the dotted arc removed.}
\end{figure}

In Section~\ref{sec:gu1n} we show that star graphs form a viable
family for any undirected dependency graph.  That is, we show that any
undirected graph can be partitioned into 1-neighbour sets that are
stars.  Fig.~\ref{fig:viable-partition} gives an example.  In
contrast, edges do not form a viable family since, for example, a
simple path with 3 nodes cannot be partitioned into 1-neighbour sets
that are edges.  For DAGs, in-arborescences are a viable family  but
directed paths are not  (consider a directed graph with 3 nodes
$u,v,w$ and two arcs $(u,v)$ and $(w,v)$).  Note that every vertex
must be included as a set on its own in any viable family.

A 1-neighbour set $U$ for $G$ is {\em viable} with respect to $\mathcal{H}$ if there is a graph $H \in \mathcal{H}$ spanning $G[U]$. Note that the 1-neighbour sets in $\mathcal{Y}_{\mathcal{H}}(G)$ are, by definition, viable for $G$, but a viable set for $G$ need not be in $\mathcal{Y}_{\mathcal{H}}(G)$.  For example, if $\mathcal{H}$ is the family of stars and $G$ is the undirected graph in Fig.~\ref{fig:viable-partition}, then any edge is a viable set for $G$ but the only viable partition is the shaded region.  Note that if $U$ is a viable set for $G$ then it is also a viable set for any subgraph $G'$ of $G$ provided $U \subseteq V(G')$.

Viable families and viable sets play an essential role in our greedy algorithm for the general 1-neighbour knapsack problem.  Viable families establish a set of structures over which our oracles can search.  This restriction simplifies both the design and analysis of efficient oracles as well as coupling the oracles to a shared family of graphs which, as we'll show later, is essential to our analysis.  In essence, viable families provide a mechanism to coordinate the oracles into returning sets with roughly similar structure.  Viable sets correctly capture the idea of an indivisible unit of choice in the greedy step.  We formalize this with the following lemma which is illustrated in Fig.~\ref{fig:viable-partition-lemma}.

\begin{lemma} \label{lemma:viable-correct} Let $G$ be a graph and
$\mathcal{H}$ be a viable family for $G$.  Let $A$ and $B$ be
1-neighbour sets for $G$.  If $\mathcal{Y}_{\mathcal{H}}(C)$ is a
viable partition of $G[C]$ where $C=A \setminus B$ then every set $Y
\in \mathcal{Y}_{\mathcal{H}}(C)$ is either (i) a singleton node $y$
such that $y \in N^{-}_{G}(B)$ (i.e., $y$ has a neighbour in $B$),
or (ii) a viable set for $G'$, which is the subgraph obtained by deleting
vertices in $B$ and arcs in $X$ where $X$ is empty if $G$ is
undirected and $X$ is the set of arcs with tails in $N^{-}_{G}(B)$
if $G$ is directed.

\end{lemma}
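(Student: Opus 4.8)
The plan is to take an arbitrary set $Y \in \mathcal{Y}_{\mathcal{H}}(C)$ with $C = A \setminus B$ and show it satisfies (i) or (ii). Since $\mathcal{Y}_{\mathcal{H}}(C)$ is a viable partition of $G[C]$, we know two things for free: $Y$ is a 1-neighbour set for $G[C]$, and there is some $H \in \mathcal{H}$ spanning $G[Y]$. The latter is exactly the ``viable structure'' half of condition (ii); the real work is showing that $Y$ is actually a 1-neighbour set for the graph $G'$ (obtained from $G$ by deleting $B$ and, in the directed case, the arcs $X$ with tails in $N^-_G(B)$), unless $Y$ falls into the degenerate case (i). So the case split will be on whether $|Y| = 1$ or $|Y| \geq 2$, and within that, on whether any vertex of $Y$ has a neighbour in $B$.

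**The singleton case.**

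First I would dispose of $|Y| = 1$, say $Y = \{y\}$. Since $y \in C \subseteq V(G)$, either $y$ has no neighbour in $G$ at all, or $\delta_G(y) \geq 1$. If $y$ has a neighbour in $G$, that neighbour lies in $C$, in $B$, or elsewhere. If it lies in $C$ but not $Y$—but wait, $y$ being a 1-neighbour set for $G[C]$ on its own forces $|N_{G[C]}(y)| \geq \min\{\delta_{G[C]}(y), 1\}$, and since $Y = \{y\}$ contains no neighbour of $y$, we must have $\delta_{G[C]}(y) = 0$, i.e.\ $y$ has no neighbour in $C$. So every neighbour of $y$ in $G$ lies in $V(G) \setminus C = B \cup (V(G) \setminus A)$. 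If $y$ has any neighbour in $B$, we are in case (i). If not, then all of $y$'s neighbours lie in $V(G) \setminus A$; but $y \in C \subseteq A$, and $A$ is a 1-neighbour set for $G$, so $|N_{G[A]}(y)| \geq \min\{\delta_G(y), 1\}$, meaning $y$ has a neighbour in $A$. That neighbour cannot be in $C$ (just shown) nor in $B$ (just assumed), contradiction—unless $\delta_G(y) = 0$, in which case $\{y\}$ is trivially a 1-neighbour set for $G'$ and we are in case (ii). So in the singleton case we always land in (i) or (ii).

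**The case $|Y| \geq 2$.**

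Now suppose $|Y| \geq 2$; I claim we are always in case (ii), and the content is showing $Y$ is a 1-neighbour set for $G'$. Fix $v \in Y$. Since $Y$ is a 1-neighbour set for $G[C]$ and $|Y| \geq 2$, if $v$ has any neighbour in $G[C]$ then it has one in $Y$; the danger is a $v$ that is isolated in $G[C]$ but not in $G$. For such a $v$, exactly as in the singleton argument, all of $v$'s $G$-neighbours lie in $B \cup (V(G)\setminus A)$, and since $A$ is a 1-neighbour set containing $v$, $v$ must have a neighbour in $A$, forcing a neighbour in $B$. Now I need: does $v$ still have a neighbour inside $Y$ after we restrict to $G'$? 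In the undirected case $G' = G[V(G) \setminus B]$ on the vertex set side but we need $Y$'s internal edges preserved; since $Y \cap B = \varnothing$ (as $Y \subseteq C = A \setminus B$), no edge inside $G[Y]$ is deleted, so $Y$'s internal edges survive and $Y$ is a 1-neighbour set for $G'$ as long as every non-isolated-in-$G'[Y]$ vertex... the subtle point is that $v$ isolated in $G[C]$ means isolated in $G[Y]$, hence isolated in $G'[Y]$, hence $\min\{\delta_{G'}(v),1\}$ might still be $1$ because $v$ has neighbours elsewhere in $G'$. Here is where I use that the spanning graph $H \in \mathcal{H}$ forces structure: actually, the cleaner route is that $Y$ is a 1-neighbour set \emph{for $G[C]$}, and I should argue directly that $Y$ is a 1-neighbour set for $G'[Y]$, which by the ``subgraph'' remark after the definition of viable sets extends. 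Let me re-examine: $Y$ being a 1-neighbour set for $G[C]$ says each $v \in Y$ with $\delta_{G[C]}(v) \geq 1$ has a $G[C]$-neighbour in $Y$; but if $\delta_{G[C]}(v) = 0$ while $\delta_G(v) \geq 1$, the definition of 1-neighbour set for $G[C]$ is vacuously satisfied at $v$, yet for $G'$ we'd need $\delta_{G'}(v)$. So the heart of the matter—and the step I expect to be the main obstacle—is showing $\delta_{G'}(v) = 0$ for exactly those $v \in Y$ that are isolated in $G[C]$ but not in $G$. In the directed case this is precisely why $X$ is removed: such a $v$ has all out-arcs going to $B$, so $v \in N^-_G(B)$, and deleting the arcs in $X$ (arcs with tails in $N^-_G(B)$) kills all of $v$'s out-arcs, giving $\delta_{G'}(v) = 0$, so the 1-neighbour condition at $v$ becomes vacuous in $G'$. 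In the undirected case the analogous fact fails in general—but then I suspect the claim is that $|Y| \geq 2$ forces no such $v$ to exist for stars, or more precisely that in the undirected setting every $v \in Y$ isolated in $G[C]$ actually must be a singleton block of the partition (because a star with $\geq 2$ vertices has no isolated vertex), contradicting $|Y| \geq 2$; so in the undirected case, $|Y| \geq 2 \Rightarrow$ every $v \in Y$ has a neighbour in $Y$ already, and since $Y \cap B = \varnothing$ that neighbour survives in $G'$, done. I would write this undirected sub-argument first (it is short and uses only that viable sets here are connected/star-like, or more robustly that $v$ isolated in $G[Y]$ with $|Y|\geq 2$ and $H$ spanning $G[Y]$ would force $H$ to have an isolated vertex, which for a viable family partition of $G[C]$ means $\{v\}$ should have been its own block), then handle the directed case via the $N^-_G(B)$/$X$ bookkeeping as above.

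**Wrap-up.**

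Finally, I would assemble: in every case $Y$ is either a singleton with a neighbour in $B$ (case (i)) or a 1-neighbour set for $G'$ that is moreover spanned by some $H \in \mathcal{H}$ (hence a viable set for $G'$, case (ii), using that viability is inherited by subgraphs containing $V(Y)$). The one place to be careful in the writeup is keeping the directed and undirected definitions of $N^-_G(\cdot)$ and of ``spanning'' straight, and making sure the arc-deletion set $X$ is invoked only in the directed branch; I expect the directed isolated-vertex analysis to be the crux, and everything else to be bookkeeping against the definitions of 1-neighbour set and viable family already in the excerpt.
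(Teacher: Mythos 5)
Your proposal is correct and follows essentially the same route as the paper's proof: split on $|Y|=1$ versus $|Y|\geq 2$, use the fact that $A$ is a 1-neighbour set to force any vertex isolated in $G[C]$ to have a neighbour in $B$ (hence landing in $N^{-}_{G}(B)$), and in the directed case observe that deleting the arcs in $X$ makes the out-degree requirement vacuous for exactly those vertices. The one subtlety you flag in the undirected $|Y|\geq 2$ case---a vertex of $Y$ isolated in $G[C]$ but not in $G'$---is resolved just as you guess, by the fact that a spanning member of $\mathcal{H}$ on two or more vertices (a star, in the paper's instantiation) has no isolated vertex, so every vertex of $Y$ already has a neighbour inside $Y$; the paper's own proof leaves this step implicit by simply asserting that $Y$ is a viable set for $G$.
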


\begin{proof}
If $|Y|=1$ then let $Y=\{y\}$.  If $\delta_{G}(y)=0$ then $Y$ is a
viable set for $G$ so it is viable set for $G'$. Otherwise, since
$A$ is a 1-neighbour set for $G$, $y$ must have a neighbour in $B$
so $y \in N^{-}_{G}(B)$.  If $|Y| >1$ then, provided $G$ is
undirected, $Y$ is also a viable set in $G$ so it is a viable set
in $G'$.  If $G$ is directed and $Y$ contains a node $y$ that is in
$N^{-}_{G}(B)$, an arc out of $y$ is not needed for feasibility
since $y$ already has a neighbour in $A$.

\end{proof}

\section{The general 1-neighbour knapsack problem} \label{sec:g1n}

Here we give a greedy algorithm {\sc Greedy-1-Neighbour} for the general 1-neighbour knapsack problem on both directed and undirected graphs.  A formal description of our algorithm is available in Fig.~\ref{alg:greedy-1-neighbour}.  {\sc Greedy1-Neighbour} relies on two oracles {\sc Best-Profit-Viable} and {\sc Best-Ratio-Viable} which find viable sets of nodes with respect to a fixed viable family $\mathcal{H}$.  In each iteration $i$, we call {\sc Best-Ratio-Viable} which, given the nodes not yet chosen by the algorithm, returns the highest profit-to-weight ratio, viable set $S_{i}$ with weight not exceeding the remaining capacity.  We also consider the set of nodes $Z$ not in the knapsack, but with at least one neighbour already in the knapsack.  Let $s_{i}$ be the node with highest profit-to-weight ratio in $Z$ not exceeding the remaining capacity.  We greedily add either $s_{i}$ or $S_{i}$ to our knapsack $U$ depending on which has higher profit-to-weight ratio. We continue until we can no longer add nodes to the knapsack.

For a viable family $\mathcal{H}$, if we can efficiently approximate the highest profit-to-weight ratio viable set to within a factor of $\beta$ and if we can efficiently approximate the highest profit viable set to within a factor of $\alpha$, then our greedy algorithm yields a polynomial time $\frac{\alpha}{2}(1-1/e^\beta)$-approximation.

\begin{figure}[tb]
\begin{center}
\fbox{
\begin{minipage}[h]{.9\linewidth}
\noindent {\sc Greedy-1-Neighbour}$(G,k):$
\begin{tabbing}
\qquad \= $S_{\max}$ = {\sc best-profit-viable}$(G,k)$ \\
\> $K = k$, $U = \emptyset$, $i = 1$, $G' = G$, $Z = \emptyset$ \\
\> WHILE there is either a viable set in $G'$ or a node in $Z$ with weight $\leq K$ \\
\> \qquad \= $S_i$ = {\sc best-ratio-viable}$(G', K)$ \\
\> \> $s_{i} = \arg\max \{p(v) / w(v) \,|\, v \in Z\}$ \\
\> \> IF $p(s_{i}) / w(s_{i}) \, > \, p(S_i) / w(S_i)$\\
\> \> \qquad $S_{i} = \{ s_{i} \}$\\
\> \> $G' = G[V(G') \setminus S_{i}]$\\
\> \> $i = i+1,\ U = U \cup V(S_i), \ K = K-w(S_i)$\\
\> \> $Z = N^{-}_{G}(U)$\\
\> \> If $G$ is directed, remove any arc in $G'$ with a tail in $Z$\\
\> RETURN $\arg\max \{p(S_{\max}), p(U)\}$
\end{tabbing}
\end{minipage}
}
\end{center}
\caption{\label{alg:greedy-1-neighbour} The {\sc Greedy-1-Neighbour} algorithm.  In each iteration $i$, we greedily add either the viable set $S_{i}$ or the node $s_{i}$ to our knapsack $U$ depending on which has higher profit-to-weight ratio.  This continues until we can no longer add nodes to the knapsack.}
\vspace{-4mm}
\end{figure}

\begin{theorem} \label{thm:gd1n}
{\sc Greedy-1-Neighbour} is a
$\frac{\alpha}{2}(1-\frac{1}{e^\beta})$-approximation for the
general 1-neighbour problem on directed and undirected graphs.
\end{theorem}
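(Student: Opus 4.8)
The plan is to follow the template of Khuller, Moss, and Naor~\cite{kmn:ipl1999} for budgeted maximum coverage, with their ``single set'' replaced by a \emph{viable set} (or a singleton in $N^{-}_{G}(U)$) and with care taken to see where the oracle factors $\alpha,\beta$ enter. Write $p^\star := p(\OPT)$, let $U_i$ be the contents of $U$ after iteration $i$ (so $U_0=\emptyset$), $T_i$ the set added in iteration $i$ (either $S_i$ or $\{s_i\}$), so $U_i=U_{i-1}\cup V(T_i)$, and $K_i = k-w(U_{i-1})$ the residual capacity entering iteration $i$. First I would check \emph{feasibility}: by induction $U_i$ is a 1-neighbour set for $G$, since adding a viable set $S_i$ preserves feasibility ($S_i$ is itself a 1-neighbour set) and adding a singleton $s_i\in Z = N^{-}_{G}(U_{i-1})$ preserves it ($s_i$ already has a neighbour in $U_{i-1}$); moreover $w(U_i)\le k$ throughout, so both $U$ and $S_{\max}$ are feasible, and since $|U|$ strictly increases the loop runs at most $n$ times.

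Next I would establish the \emph{greedy progress} recurrence. Fix an iteration $i$ with $p(U_{i-1})<p^\star$. Apply Lemma~\ref{lemma:viable-correct} with $A=\OPT$ and $B=U_{i-1}$ (both 1-neighbour sets for $G$): the viable partition $\mathcal{Y}_{\mathcal{H}}(\OPT\setminus U_{i-1})$ of $G[\OPT\setminus U_{i-1}]$ consists of sets $Y$ that are either (i) singletons lying in $N^{-}_{G}(U_{i-1})=Z$, or (ii) viable sets for the graph $G'_i$ the algorithm holds entering iteration $i$ --- and the deletions in the definition of $G'$ in Lemma~\ref{lemma:viable-correct} are exactly the deletions the algorithm performs. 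Since $p$ is additive with nonnegative values, $\sum_Y p(Y)=p(\OPT\setminus U_{i-1})\ge p^\star-p(U_{i-1})$ while $\sum_Y w(Y)=w(\OPT\setminus U_{i-1})\le w(\OPT)\le k$, so some part $Y^{(i)}$ has $p(Y^{(i)})/w(Y^{(i)})\ge (p^\star-p(U_{i-1}))/k$. Provided $w(Y^{(i)})\le K_i$, this $Y^{(i)}$ is a legal candidate for the oracle (case (ii)) or for the scan over $Z$ (case (i)) in iteration $i$, so the chosen $T_i$ --- the higher-ratio of $S_i$ and $\{s_i\}$ --- satisfies $p(T_i)/w(T_i)\ge \beta\,(p^\star-p(U_{i-1}))/k$ (in case (i) the exact maximum over $Z$ only does better; in case (ii) the $\beta$-approximate oracle costs the factor $\beta$). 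Multiplying by $w(T_i)$ yields
\[
p^\star-p(U_i)\ \le\ \bigl(p^\star-p(U_{i-1})\bigr)\Bigl(1-\frac{\beta\,w(T_i)}{k}\Bigr),
\]
valid as long as the maximum-ratio part of $\OPT$ still fits in the residual capacity.

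Then I would handle the part that does not fit and combine the estimates. Let $\ell$ be the first iteration in which $w(Y^{(\ell)})>K_\ell$ (if no such $\ell$ exists, then at termination some part of $\OPT\setminus U$ would still be a fitting viable set or a fitting node of $Z$, contradicting the stopping rule, unless $\OPT\subseteq U$ and we are already done). For $i<\ell$ the recurrence above holds; unrolling it with $1+x\le e^{x}$ and $\sum_{i<\ell}w(T_i)=w(U_{\ell-1})$ gives $p^\star-p(U_{\ell-1})\le p^\star e^{-\beta y}$ with $y:=w(U_{\ell-1})/k\in[0,1]$, i.e.\ $p(U_{\ell-1})\ge(1-e^{-\beta y})p^\star$. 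From $w(Y^{(\ell)})>K_\ell=k-w(U_{\ell-1})$ together with $p(Y^{(\ell)})/w(Y^{(\ell)})\ge (p^\star-p(U_{\ell-1}))/k$ we get $p(Y^{(\ell)})>(1-y)(p^\star-p(U_{\ell-1}))$, hence
\[
p(U_{\ell-1})+p(Y^{(\ell)})\ >\ (1-y)p^\star+y\,p(U_{\ell-1})\ \ge\ \bigl(1-y\,e^{-\beta y}\bigr)p^\star\ \ge\ \bigl(1-e^{-\beta}\bigr)p^\star,
\]
where the last step uses $\beta\le1$ and $y\le1$, so that $y\,e^{-\beta y}$ is maximized on $[0,1]$ at $y=1$. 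Now $Y^{(\ell)}$ is a viable set for $G$ with $w(Y^{(\ell)})\le w(\OPT)\le k$ (a singleton is always viable, and any witness $H\in\mathcal{H}$ that $Y^{(\ell)}$ is viable for the subgraph $G'_\ell$ is a witness for $G$ as well, since $E(H)\subseteq E(G'_\ell[Y^{(\ell)}])\subseteq E(G[Y^{(\ell)}])$), so {\sc best-profit-viable} guarantees $p(S_{\max})\ge\alpha\,p(Y^{(\ell)})$. Therefore $p(U_{\ell-1})+\frac1\alpha\,p(S_{\max})>(1-e^{-\beta})p^\star$, so one of these two terms is at least $\frac12(1-e^{-\beta})p^\star$; since $\alpha\le1$ and the algorithm returns $\arg\max\{p(S_{\max}),p(U)\}$ with $p(U)\ge p(U_{\ell-1})$, in either case the returned profit is at least $\frac{\alpha}{2}\bigl(1-1/e^{\beta}\bigr)p^\star$.

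I expect the main obstacle to be the interface between the dependency constraints and the greedy step: the marginal comparison must be made against \emph{viable} chunks of $\OPT$, and certifying --- via Lemma~\ref{lemma:viable-correct} --- that every chunk of $\OPT\setminus U_{i-1}$ is genuinely something an oracle or the $Z$-scan could have selected in the \emph{current} graph $G'_i$ (matching the algorithm's cumulative arc deletions to the lemma's hypotheses) is the delicate point, along with the overflow iteration $\ell$, where routing the leftover chunk into $S_{\max}$ rather than into $U$ is precisely what forces the factor $1/2$ and the factor $\alpha$.
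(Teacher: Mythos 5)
Your proposal is correct and follows essentially the same route as the paper: a Khuller--Moss--Naor-style greedy analysis that uses Lemma~\ref{lemma:viable-correct} to split $\OPT\setminus \mathcal{U}_{i-1}$ into pieces that either {\sc best-ratio-viable} or the $Z$-scan could have selected, the resulting per-iteration recurrence with the factor $\beta$, and a charge of the overflow piece against $S_{\max}$ to obtain the $\tfrac{\alpha}{2}(1-1/e^{\beta})$ bound, with your bookkeeping (unrolling via $1+x\le e^{x}$ and bounding $1-y e^{-\beta y}\ge 1-e^{-\beta}$) being a minor variant of the paper's inductive product bound. The only imprecision is your parenthetical ``a singleton is always viable'' (a vertex of positive degree is not a 1-neighbour set for $G$, so not viable for $G$), but the resulting charge $p(S_{\max})\ge\alpha\,p(Y^{(\ell)})$ is exactly the step the paper itself asserts when $\mathcal{S}_{\ell+1}$ is a node of $Z\cap\OPT$, so your argument matches the paper's level of rigor there.
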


\begin{proof}
Let $\OPT$ be the set of vertices in an optimal solution.  In addition, let $U_i =\cup_{j = 1}^{i} V(S_j)$ correspond to $U$ after the first $i$ iterations where $U_{0} = \emptyset$.  Let $\ell+1$ be the first iteration in which there is either a node in $Z \cap \OPT$ or a viable set in $\OPT \setminus U_\ell$ whose profit-to-weight ratio is larger than $S_{\ell+1}$.  Of these, let $\mathcal{S}_{\ell+1}$ be the node or set with highest profit-per-weight.  For convenience, let  $\mathcal{S}_{i} = S_{i}$ and $\mathcal{U}_{i} = U_{i}$ for $i = 1 \ldots \ell$, and $\mathcal{U}_{\ell+1} = \mathcal{U}_{\ell} \cup \mathcal{S}_{\ell+1}$.  Notice that $\mathcal{U}_{\ell}$ is a feasible solution to our problem but that $\mathcal{U}_{\ell+1}$ is not since it contains $\mathcal{S}_{\ell+1}$ which has weight exceeding $K$.  We analyze our algorithm with respect to $\mathcal{U}_{\ell+1}$.

\begin{lemma} \label{lem:profit-inc-dag}
For each iteration $i = 1, \ldots, \ell+1$, the following holds:
$$p(\mathcal{S}_i) \geq \beta\frac{w(\mathcal{S}_i)}{k}\left(p(\OPT)-p(\mathcal{U}_{i-1})\right)$$
\end{lemma}

\begin{proof}
Fix an iteration $i$ and let $I$ be the graph induced by $\OPT \setminus \mathcal{U}_{i-1}$.  Since both $\OPT$ and $\mathcal{U}_{i-1}$ are 1-neighbour sets for $G$, by Lemma~\ref{lemma:viable-correct}, each $Y \in \mathcal{Y}_{\mathcal{H}}(I)$ is either a viable set for $G'$ (so it can be selected by {\sc best-ratio-viable}) or a singleton vertex in $N^{-}_{G}(\mathcal{U}_{i-1})$ (which {\sc Greedy-1-Neighbour} always considers).  Thus, if $i \leq \ell$, then by the greedy choice of the algorithm and approximation ratio of {\sc best-ratio-viable} we have
\begin{equation} \label{eq:viable-lb}
\frac{p(\mathcal{S}_i)}{w(\mathcal{S}_i)} \geq \beta \frac{p(Y)}{w(Y)} \ \mbox{for all} \ Y \in \mathcal{Y}_{\mathcal{H}}(I).
\end{equation}
If $i=\ell+1$ then $p(\mathcal{S}_{\ell+1})/w(\mathcal{S}_{\ell+1})$ is, by definition, at least as large as the profit-to-weight ratio of any $Y \in \mathcal{Y}$.  It follows that for $i=1, \ldots, \ell+1$:
\begin{eqnarray*}
p(\OPT) - p(\mathcal{U}_{i-1}) = \sum_{Y \in \mathcal{Y}_{\mathcal{H}}(I)} p(Y) & \leq &
\frac{1}{\beta}\frac{p(\mathcal{S}_i)}{w(\mathcal{S}_i)} \sum_{Y \in \mathcal{Y}_{\mathcal{H}}(I)}
w(Y),\mbox{ by Eq.~(\ref{eq:viable-lb}) }\\
& \leq & \frac{1}{\beta}\frac{p(\mathcal{S}_i)}{w(\mathcal{S}_i)} w(\OPT),\mbox{
since $I$ is a subset of \OPT} \\
& \leq & \frac{1}{\beta}\frac{k}{w(\mathcal{S}_i)}p(\mathcal{S}_i),\mbox{ since
$w(\OPT) \leq k$}
\end{eqnarray*}
Rearranging gives Lemma~\ref{lem:profit-inc-dag}.
\hfill  \end{proof}

\begin{lemma} \label{lem:profit}
For $i = 1, \ldots, \ell+1$, the following holds:
$$p(\mathcal{U}_i) \geq \left[1-\prod_{j = 1}^i
\left(1-\beta\frac{w(\mathcal{S}_j)}{k} \right) \right] p(\OPT)$$
\end{lemma}

\begin{proof}
We prove the lemma by induction on $i$.  For $i = 1$, we need to
show that
\begin{equation}\label{eq:induct-1}
p(\mathcal{U}_1)  \geq \beta\frac{w(\mathcal{S}_1)}{k}
p(\OPT).
\end{equation}
This follows immediately from Lemma~\ref{lem:profit-inc-dag} since $p(\mathcal{U}_{0})=0$ and $\mathcal{U}_{1}=\mathcal{S}_1$. Suppose the lemma holds for iterations 1 through $i-1$.  Then it is
easy to show that the inequality holds for iteration $i$ by applying
Lemma~\ref{lem:profit-inc-dag} and the inductive hypothesis.  This
completes the proof of Lemma~\ref{lem:profit}.
\hfill
\end{proof}

We are now ready to prove Theorem~\ref{thm:gd1n}.  Starting with the inequality in Lemma~\ref{lem:profit} and using
the fact that adding $\mathcal{S}_{\ell+1}$ violates the knapsack constraint (so
$w(\mathcal{U}_{\ell+1}) > k$) we have
\begin{eqnarray*}
p(\mathcal{U}_{\ell+1}) & \geq & \left[1-\prod_{j = 1}^{\ell + 1}
\left(1-\beta\frac{w(\mathcal{S}_j)}{k} \right) \right] p(\OPT) \\
& \geq &\left[1-\prod_{j = 1}^{\ell + 1}
\left(1-\beta\frac{w(\mathcal{S}_j)}{w(U_{\ell+1})} \right) \right]
p(\OPT) \\
& \geq & \left[1-\left(1-\frac{\beta}{\ell+1}\right)^{\ell+1} \right]
p(\OPT) \geq  \left(1-\frac{1}{e^\beta}\right)p(\OPT)
\end{eqnarray*}
where the penultimate inequality follows because equal $w(\mathcal{S}_j)$ maximize the product.  Since $S_{\max}$ is within a factor of $\alpha$ of the maximum profit viable set of weight $\leq k$ and $\mathcal{S}_{\ell+1}$ is contained in \OPT,
$p(S_{\max}) \geq \alpha \cdot p(\mathcal{S}_{\ell+1})$.  Thus, we have $p(U) + p(S_{\max})/ \alpha \geq p(\mathcal{U}_\ell)
+ p(\mathcal{S}_{\ell+1}) = p(\mathcal{U}_{\ell+1}) \geq
\left(1-\frac{1}{e^\beta}\right)p(\OPT)$.  Therefore
$\max\{p(U), p(S_{\max})\} \geq
\frac{\alpha}{2}\left(1-\frac{1}{e^\beta}\right)p(\OPT)$.

\hfill
\end{proof}

\subsection{The general, undirected 1-neighbour problem} \label{sec:gu1n}

Here we formally show that stars are a viable family for undirected
graphs and describe polynomial-time implementations of {\sc
Best-Profit-Viable} and {\sc Best-Ratio-Viable} for the star family.  Both oracles achieve an approximation ratio of $(1-\varepsilon)$ for any $\varepsilon > 0$.  Combined with {\sc Greedy-1-Neighbour} this yields a polynomial time $((1-\varepsilon)/2) \cdot (1-1/e^{1-\varepsilon})$-approximation for the general, undirected 1-neighbour problem.  In addition, we show that this approximation is nearly tight by showing that the general, undirected 1-neighbour problem generalizes many coverage problems including the max $k$-cover and budgeted maximum coverage, neither of which have a $(1-1/e+\epsilon)$-approximation for any $\epsilon > 0$ unless P=NP.

\subsubsection{Stars} \label{sec:stars}

For the rest of this section, we assume $\mathcal{H}$ is the family of
star graphs ({\em i.e.} graphs composed of a center vertex $u$ and a (possibly empty) set of edges all of which have $u$ as an endpoint) so that
given a graph $G$ and a capacity $k$, {\sc Best-Profit-Viable} returns
the highest profit, viable star with weight at most $k$ and {\sc Best-Ratio-Viable} returns the highest profit-to-weight, viable star with weight at most $k$.

\begin{lemma}\label{lem:graphs-into-stars}
The nodes of any undirected constraint graph $G$ can be partitioned into 1-neighbour sets that are stars.
\end{lemma}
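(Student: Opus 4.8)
The plan is to reduce the statement to trees and then partition a tree by repeatedly peeling off a star at the end of a longest path. First I would dispose of isolated vertices: if $\delta_G(v)=0$, let $\{v\}$ be its own part; this is a star (a centre with no leaves) and a $1$-neighbour set because $\min\{\delta_G(v),1\}=0$. Hence it suffices to partition each connected component $C$ of $G$ with $|V(C)|\ge 2$ into $1$-neighbour sets that are stars, and note that for such a component every part must have at least two vertices (a singleton $\{v\}$ with $\delta_G(v)\ge 1$ fails the $1$-neighbour condition). Fixing a spanning tree $T$ of $C$, it is enough to partition $V(T)$ into sets each of which induces, within $T$, a star on at least two vertices, since such a star uses only edges of $T\subseteq G$ lying inside the part.

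The core is the following claim, proved by strong induction on $|V(T)|$: the vertices of any tree on at least two vertices can be partitioned into sets, each inducing (in the tree) a star on at least two vertices. Take a longest path $v_0 v_1 \cdots v_d$ in $T$; then $d\ge 1$. If $d\le 2$ then $T$ is itself a star and we are done (these are also the base cases). If $d\ge 3$, then $v_0$ is a leaf, and by maximality of the path every neighbour of $v_1$ other than $v_2$ must also be a leaf (an internal such neighbour would extend the path); let $L=N_T(v_1)\setminus\{v_2\}$, so $v_0\in L$ and $\{v_1\}\cup L$ induces a star centred at $v_1$ on $|L|+1\ge 2$ vertices. Deleting $\{v_1\}\cup L$ from $T$ leaves a tree $T'$ (we removed a pendant star whose only edge to the rest was $v_1 v_2$) on at least $d-1\ge 2$ vertices, to which the induction hypothesis applies; concatenating the resulting partition with $\{v_1\}\cup L$ proves the claim.

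Finally I would assemble the pieces. Each part $Y$ produced for a component $C$ has $|Y|\ge 2$, and the certifying star uses only edges of $G$ with both endpoints in $Y$, so some $H\in\mathcal{H}$ spans $G[Y]$. Moreover, because $|Y|\ge 2$, every $v\in Y$ has a neighbour in $G[Y]$ (the star edge joining it to, or as, the centre), so $|N_{G[Y]}(v)|\ge 1\ge\min\{\delta_G(v),1\}$ and $Y$ is a $1$-neighbour set for $G$. Together with the singleton parts for isolated vertices, these parts form a partition of $V(G)$ into $1$-neighbour stars, which is the lemma.

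The only delicate point — and the step I expect to be the real obstacle — is the inductive step of the tree claim: one must avoid a partition that strands a non-isolated vertex as a singleton. Peeling the star at the end of a longest path is exactly what guarantees this, since the removed star's sole attachment to the remainder is the single edge $v_1 v_2$, so the remainder stays a tree with $\ge 2$ vertices; a more naive greedy choice (e.g.\ the star around an arbitrary high-degree vertex) can leave behind isolated vertices and fails. Everything else is routine bookkeeping.
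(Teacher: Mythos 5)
Your proof is correct and takes essentially the same approach as the paper: treat isolated vertices as trivial stars, take a spanning tree of each non-trivial component, and decompose that tree into non-trivial stars. The only difference is mechanical --- the paper repeatedly deletes interior edges of paths with more than two edges until every component of the tree is a star, whereas you peel off the pendant star at the end of a longest path and induct, which makes the ``no non-isolated vertex gets stranded as a singleton'' invariant more explicit.
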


\begin{proof}
Let $G_{i}$ be an arbitrary connected component of $G$.  If $|V(G_{i})|=1$ then $V(G_{i})$ is trivially a 1-neighbour set and the trivial star consisting of a single node is a spanning subgraph of $G_{i}$.  If $G_{i}$ is non-trivial then let $T$ be any spanning tree of $G_{i}$ and consider the following construction: while $T$ contains a path $P$ with $|P| > 2$, remove an interior edge of $P$ from $T$.  When the algorithm finishes, each path has at least one edge and at most two edges, so $T$ is a set of non-trivial stars, each of which is a 1-neighbour set.
\hfill  \end{proof}

\paragraph{{\sc Best-Profit-Viable}}

Finding the maximum profit, viable star of a graph $G$ subject to a knapsack constraint $k$ reduces to the traditional unconstrained knapsack problem which has a well-known FPTAS that runs in $O(n^{3} / \varepsilon)$ time~\cite{ibarra-kim:jacm1975,vazirani}. Every vertex $v \in V(G)$ defines a knapsack problem:  the items are $N_{G}(v)$ and the capacity is $k-w(v)$.  Combining $v$ with the solution returned by the  FPTAS  yields a candidate star.  We consider the candidate star for each vertex and return the one with highest profit.  Since we consider all possible star centers, {\sc Best-Profit-Viable} runs in $O(n^{4} / \varepsilon)$ time and returns a viable star within a factor of $(1-\varepsilon)$ of optimal, for any $\varepsilon > 0$.

\paragraph{{\sc Best-Ratio-Viable}}

We again turn to the FPTAS for the standard knapsack problem.  Our goal is to find a high profit-to-weight star in $G$ with weight at most $k$.  The standard FPTAS for the unconstrained knapsack problem builds a dynamic programing table $T$  with $n$ rows and $nP'$ columns where $n$ is the number of available items and $P'$ is the maximum adjusted profit over all the items.  Given an item $v$, its adjusted profit is $p'(v) = \lfloor \frac{p(v)}{ (\varepsilon / n) \cdot P} \rfloor$ where $P$ is the true maximum profit over all the items.  Each entry $T[i,p]$ gives the weight of the minimum weight subset over the first $i$ items achieving profit $p$.

Notice that, for any fixed profit $p$, $p / T[n,p]$ is the highest profit-to-weight ratio for that $p$.  Therefore, for $1 \leq p \leq nP'$, the $p$ maximizing $p / T[n,p]$ gives the highest profit-to-weight ratio of any feasible subset provided $T[n,p] \leq k$.  Let $S$ be this subset.  We will show that $p(S) / w(S)$ is within a factor of $(1-\varepsilon)$ of \OPT\ where \OPT\ is the profit-to-weight ratio of the highest profit-to-weight ratio feasible subset $S^{*}$.

Letting $r(v) = p(v) / w(v)$ and $r'(v) = p'(v) / w(v)$, and following~\cite{vazirani}, we have
\[
r(S^{*}) - ((\varepsilon / n) \cdot P)  \cdot  r'(S^{*}) \leq \varepsilon P / w(S^{*})
\]
since, for any item $v$, the difference between $p(v)$ and $((\varepsilon / n) \cdot P) \cdot p'(v)$ is at most $(\varepsilon / n) \cdot P$ and we can fit at most $n$ items in our knapsack.  Because $r'(S) \geq r'(S^{*})$ and \OPT \ is at least $P / w(S^{*})$ we have
\[
r(S)  \geq  (\varepsilon / n) \cdot P \cdot r'(S^{*}) \geq  r(S^{*}) - \varepsilon P / w(S^{*}) \geq   \OPT - \varepsilon \OPT = (1-\varepsilon)\OPT.
\]
Now, just as with {\sc Best-Profit-Viable}, every vertex $v \in V(G)$ defines a knapsack instance where $N_{G}(V)$ is the set of items and $k-w(v)$ is the capacity.  We run the modified FPTAS for knapsack on the instance defined by $v$ and add $v$ to the solution to produce a set of candidate stars.  We return the star with highest profit-to-weight ratio.  Since we consider all possible star centers, {\sc Best-Ratio-Viable} runs in $O(n^{4} / \varepsilon)$ time and returns a viable star within a factor of $(1-\varepsilon)$ of optimal, for any $\varepsilon > 0$.

\paragraph{Justifying Stars}

Besides some isolated vertices, our solution is a set of edges, but
the edges are not necessarily vertex disjoint.  Analyzing our greedy
algorithm in terms of edges risks counting vertices multiple times.
Partitioning into stars allows us to charge increases in the profit
from the greedy step without this risk.  In fact, stars are
essentially the {\em simplest} structure meeting this requirement
which is why we use them as our viable family.

\paragraph{Improving the approximation ratio}

Often this style of greedy algorithm can be augmented with an
``enumeration over triples'' step to improve the ratio of
$(1-\epsilon)(1-{1\over e^\epsilon})$.  However, such an enumeration
would require enumerating over all possible triples of {\em stars} in
our case.  Doing so cannot be done in polynomial time, unless the
graph has bounded degree.

\subsubsection{General, undirected 1-neighbour knapsack is APX-complete} \label{sec:apx-hardness}

Here we show that it is NP-hard to approximate the general, undirected 1-neighbour knapsack problem to within a factor better than $1-1/e+\epsilon$ for any $\epsilon > 0$ via an approximation-preserving reduction from max $k$-cover~\cite{feige:jacm1998}.  An instance of max $k$-cover is a set cover instance $(S,{\mathcal R})$ where $S$ is a ground set of $n$ items and $\mathcal R$ is a collection of subsets of $S$.  The goal is to cover
as many items in $S$ using at most $k$ subsets from $\mathcal R$.

\begin{theorem}
The general, undirected 1-neighbour knapsack problem has no $(1-1/e+\epsilon)$-approximation for any $\epsilon > 0$ unless P$=$NP.
\end{theorem}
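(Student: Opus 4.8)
The plan is to give an approximation-preserving reduction from max $k$-cover, for which Feige showed no polynomial-time $(1-1/e+\epsilon)$-approximation exists (for any $\epsilon>0$) unless P$=$NP. Given a max $k$-cover instance $(S,\mathcal R)$, I would build an undirected constraint graph $G$ with capacity $k'=k$ as follows: for each item $s\in S$ introduce an \emph{item vertex} $x_s$ with $w(x_s)=0$ and $p(x_s)=1$; for each set $R\in\mathcal R$ introduce a \emph{set vertex} $y_R$ with $w(y_R)=1$ and $p(y_R)=0$; and join $y_R$ to $x_s$ exactly when $s\in R$. Without loss of generality every $R\in\mathcal R$ is nonempty and every $s\in S$ lies in some set (drop empty sets and uncoverable items first), so every vertex of $G$ has degree at least $1$. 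Note $|V(G)|=|S|+|\mathcal R|$ is polynomial in the input size, so polynomial-time approximation transfers.

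First I would characterize the feasible solutions of the resulting 1-neighbour instance. A 1-neighbour set $U$ may contain $y_R$ only if it also contains some $x_s$ with $s\in R$, and may contain $x_s$ only if it also contains some $y_R$ with $s\in R$. Since item vertices have weight $0$, the weight of $U$ equals the number of set vertices it contains, so the knapsack constraint says $U$ picks a subfamily $T=\{R : y_R\in U\}$ with $|T|\le k$ together with a collection of items, each lying in some $R\in T$ and each $R\in T$ containing at least one chosen item; then $p(U)$ equals the number of chosen items, which is at most $|\bigcup_{R\in T}R|$.

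Next I would establish the two directions. For completeness: given $T\subseteq\mathcal R$ with $|T|\le k$ covering $S'\subseteq S$, the set $\{y_R : R\in T\}\cup\{x_s : s\in S'\}$ is a feasible 1-neighbour solution of weight $|T|\le k$ and profit $|S'|$, so $\OPT$ of the knapsack instance is at least $\OPT$ of the max $k$-cover instance. For soundness and approximation transfer: from any feasible 1-neighbour solution $U$, the subfamily $T=\{R : y_R\in U\}$ satisfies $|T|\le k$ and covers at least $p(U)$ items. Hence the two optima coincide, and a polynomial-time $(1-1/e+\epsilon)$-approximation for general, undirected 1-neighbour knapsack would, by reading off $T$ from its output, yield a polynomial-time $(1-1/e+\epsilon)$-approximation for max $k$-cover, contradicting Feige's theorem unless P$=$NP.

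This is a clean gadget reduction, so I do not expect a serious technical obstacle; the only points needing care are (i) checking that the \emph{mutual} 1-neighbour constraint between an item vertex and a set vertex is simultaneously satisfiable whenever the corresponding set is nonempty and at least one of its items is taken, so that ``choose a family $T$ of at most $k$ sets and take their union of items'' is genuinely realizable as a feasible knapsack solution, and (ii) verifying that the weight-$0$ item vertices behave correctly — they never consume capacity, and an optimal knapsack solution may as well take every item covered by its chosen sets. Pruning empty sets and uncoverable items at the outset disposes of the remaining corner cases.
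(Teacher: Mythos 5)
Your reduction is exactly the one in the paper: the same bipartite graph with weight-$0$, profit-$1$ item vertices and weight-$1$, profit-$0$ set vertices, capacity $k$, invoking Feige's hardness of max $k$-cover. Your write-up is somewhat more careful than the paper's (explicitly checking the mutual 1-neighbour constraint is realizable and pruning empty sets and uncovered items), but the approach and the argument are the same and correct.
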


\begin{proof}
Given an instance of $(S,{\mathcal R})$ of max $k$-cover, build a bipartite
graph $G=(U \cup V, E)$ where $U$ has a node $u_{i}$ for each $s_i \in
S$ and $V$ has a node $v_{j}$ for each set $R_{j} \in {\mathcal R}$.  Add
the edge $\{u_{i}, v_{j}\}$ to $E$ if and only if $u_{i} \in R_{j}$.
Assign profit $p(u_{i})=1$ and weight $w(u_{i})=0$ for each vertex
$u_{i} \in U$ and profit $p(v_{j})=0$ and weight $w(u_{i})=1$ for each
vertex $v_{j} \in V$.  Since no pair of vertices in $U$ have an edge
and since every vertex in $U$ has no weight, our strategy is to pick
vertices from $V$ and all their neighbours in $U$.  Since every
vertex of $U$ has unit profit, we should choose the $k$ vertices from
$V$ which collectively have the most neighbours.  This is exactly the
max $k$-cover problem.

\end{proof}

The max $k$-cover problem represents a class of {\em budgeted maximum
coverage} (\BMC) problems where the elements in the base set have
unit profit (referred to as weights in~\cite{kmn:ipl1999}) and the
cover sets have unit weight (referred to as costs
in~\cite{kmn:ipl1999}).  In fact, one can use the above reduction to
represent an arbitrary \BMC instance: form the same bipartite graph,
assign the element weights in \BMC as vertex profits in $U$, and finally
assign the covering set costs in \BMC as vertex weights in $V$.

\subsection{General, directed 1-neighbour knapsack is hard to approximate} \label{sec:gd1n}

Here we consider the 1-neighbour knapsack problem where $G$ is directed and has arbitrary profits and weights.  We show via a reduction from {\em directed Steiner tree} (\DST) that the general, directed 1-neighbour problem is hard to approximate within a factor of $1/ \Omega(\log^{1-\varepsilon} n)$.  Our result holds for DAGs.  Because of this negative result, we also don't expect that good approximations exist for either {\sc Best-Profit-Viable} and {\sc Best-Ratio-Viable} for any family of viable graphs.

In the \DST problem on DAGs we are given a DAG $G=(V,E)$ where each arc has an associated cost, a subset of $t$
vertices called {\em terminals} and a root vertex $r \in V$.  The goal
is to find a minimum cost set of arcs that together connect $r$ to all
the terminals ({\em i.e.}, the arcs form an out-arborescence rooted at
$r$).  For all $\varepsilon >0$, \DST admits no
$\log^{2-\varepsilon} n$-approximation algorithm unless $NP\subseteq
ZTIME[n^{\poly\log n}]$~\cite{HK}.  This result holds even for very
simple DAGs such as {\em leveled DAGs} in which $r$ is the only root, $r$ is at level 0,
each arc goes from a vertex at level $i$ to a vertex at level $i+1$,
and there are $O(\log n)$ levels.  We use leveled DAGs in our proof of the following theorem.

\begin{theorem} \label{thm:gd1nlb}
The general, directed 1-neighbour knapsack problem is
$1/\Omega(\log^{1-\varepsilon} n)$-hard to approximate unless $NP\subseteq
ZTIME [n^{\poly\log n}]$.
\end{theorem}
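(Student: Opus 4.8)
The plan is to give an approximation-preserving reduction in the reverse direction: from an instance of \DST on a leveled DAG to an instance of the general, directed 1-neighbour knapsack problem, so that a good approximation for the latter yields a good approximation for the former, contradicting the hardness result of~\cite{HK}. The intuition is that selecting a terminal in the knapsack should \emph{force} the selection of an out-arborescence-like structure connecting it back toward the root, with the knapsack weight bounded by the budget measuring the cost of the chosen arcs.

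First I would set up the gadget. Given a leveled DAG $G=(V,E)$ with root $r$, terminal set of size $t$, arc costs, and $O(\log n)$ levels, I build a directed instance $G'$ as follows. Since the 1-neighbour constraint acts on out-neighbours, I want ``being in the solution'' to propagate from a terminal toward $r$; so I reverse arcs (or, equivalently, think of the arborescence constraint in the contrapositive). For each arc $e=(x,y)\in E$ I introduce a subdivision vertex $a_e$ of profit $0$ and weight equal to $\mathrm{cost}(e)$ (suitably scaled to integers), with arcs arranged so that the only way for a level-$i$ vertex to satisfy its 1-neighbour constraint is to have some incoming arc's gadget and the corresponding level-$(i-1)$ vertex also selected. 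The root $r$ gets out-degree $0$ (so it is always feasible) and weight $0$. Each terminal gets a large profit $M$ and weight $0$; all original non-terminal, non-root vertices get profit $0$ and weight $0$. The knapsack capacity $k$ is a guessed bound on the total arc cost of the optimal arborescence — since there are only polynomially many distinct achievable cost values, or at worst we can binary-search / enumerate over $O(\log)$-bit scaled budgets, we may assume $k$ equals the optimum \DST cost. With capacity exactly the \DST optimum, the maximum-profit feasible set must pick all $t$ terminals (profit $tM$), and to do so feasibly it must select a set of subdivision vertices whose induced reachability gives each terminal a directed path to $r$, i.e.\ an out-arborescence (after un-reversing) of cost $\le k$; conversely the optimal arborescence gives such a feasible set. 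A $\rho$-approximation to the knapsack instance, with $M$ chosen large enough that missing even one terminal is catastrophic, is then forced to pick all terminals and hence produces an arc set of cost $\le k$ connecting all terminals — an optimal \DST solution — so feasibility alone transfers; to get the full hardness one instead makes the reduction so that the \emph{weight} used corresponds to \DST cost and a knapsack approximation translates into a cost-ratio bound.

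The cleaner route, which I expect to actually carry out, is the standard one used for such 1-neighbour-to-\DST reductions: make terminals have profit $1$ and weight $0$, subdivision vertices have profit $0$ and weight $\mathrm{cost}(e)$, set $k$ to the \DST optimum, observe $\OPT$ of the knapsack instance is exactly $t$, and argue that any solution achieving profit $\ge t/\rho$ must connect at least $t/\rho$ terminals to $r$ using total arc cost $\le k$; averaging / a counting argument over which terminals are connected then yields an approximation to a covering-style relaxation of \DST, or one reduces from the \DST decision/gap version directly so that a $1/\Omega(\log^{1-\varepsilon}n)$-approximation for knapsack contradicts the $\log^{2-\varepsilon}n$-inapproximability of \DST. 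Since $G$ is leveled with $O(\log n)$ levels and $n$ vertices, the subdivided graph $G'$ still has $\poly(n)$ vertices, so $\log^{1-\varepsilon}|V(G')| = \Theta(\log^{1-\varepsilon} n)$ and the parameters line up; the instance $G'$ is a DAG since subdividing arcs of a DAG keeps it acyclic, matching the claim that the hardness holds for DAGs.

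The main obstacle is making the forcing argument airtight: I must ensure that the 1-neighbour constraints \emph{exactly} encode ``reachable to the root by a selected path'' with no cheaper cheating — in particular that a terminal cannot be declared feasible via a short cycle or a ``dangling'' out-edge, and that the weight of a feasible terminal-covering set is \emph{exactly} the cost of a corresponding arc set (no double counting of shared subdivision vertices, which is actually fine since shared arcs should be paid once, as in a Steiner tree). Handling the direction of arcs correctly relative to the ``out-neighbour'' semantics, and the guessing/scaling of the budget $k$ so that the reduction stays polynomial and the gap is preserved, are the remaining technical points; given the leveled structure these are routine once the gadget is fixed.
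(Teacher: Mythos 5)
Your gadget is essentially the paper's: subdivide each arc of the leveled DAG with a zero-profit dummy vertex of weight equal to the arc cost, reverse all orientations so the 1-neighbour (out-neighbour) constraint propagates selections toward the root, give terminals profit $1$ and weight $0$, and guess the budget $C$ by enumerating powers of two (losing a factor $2$). That part is fine, and your observation that the instance stays a DAG of polynomial size is correct. The first variant you sketch, however --- giving terminals a huge profit $M$ so that ``missing even one terminal is catastrophic'' --- cannot work: a $\rho$-approximation only guarantees a $1/\rho$ fraction of the optimal profit, so for $\rho = \Omega(\log^{1-\varepsilon} n)$ nothing forces it to collect all terminals, no matter how large $M$ is. You seem to sense this and pivot, but the pivot is where the real content lives and it is left as ``averaging / a counting argument \ldots yields an approximation to a covering-style relaxation of \DST,'' which is not an argument.

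The missing idea is the iterated, set-cover-style application of the approximation algorithm. With budget $C$ equal to (a power-of-two overestimate of) the \DST optimum, the knapsack instance has a feasible solution of profit $t$, so an $\alpha$-approximation returns a set of arcs of total cost at most $C$ that connects at least $t/\alpha$ terminals to the root. Now zero out the weights of the dummy vertices on the arcs already used and the profits of the terminals already connected, and run the algorithm again on the residual instance; each round connects at least a $1/\alpha$ fraction of the remaining terminals at additional cost at most $C$, so after $O(\alpha\log t)$ rounds every terminal is connected and the union of all selected arcs costs $O(\alpha\log t)\cdot C$. This is what converts an $\alpha$-approximation for directed 1-neighbour knapsack into an $O(\alpha\log t)$-approximation for \DST, and it is precisely why the resulting hardness is $\log^{1-\varepsilon} n$ rather than the $\log^{2-\varepsilon} n$ hardness of \DST itself: one logarithmic factor is spent on the covering iteration. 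Your write-up never accounts for this exponent drop, which is the telltale sign that the quantitative core of the reduction is absent. Once you add the iteration and the final arithmetic ($\alpha\cdot O(\log t) \geq \log^{2-\varepsilon} n$ forces $\alpha = \Omega(\log^{1-\varepsilon} n)$), your construction matches the paper's proof.
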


\begin{proof}
Let $D$ be an instance of \DST where the underlying graph $G$ is a leveled DAG with a single root $r$.  Suppose there is a solution to $D$ of cost $C$.

\begin{claim} \label{claim:cover}
If there is an $\alpha$-approximation algorithm for
the general, directed 1-neighbour knapsack problem then a solution
to $D$ with cost $O(\alpha \log t)\times C$ can be found where $t$
is the number of terminals in $D$.
\end{claim}

\begin{proof}
Let $G=(V,A)$ be the DAG in instance $D$.  We modify it to
$G'=(V',A')$ where we split each arc $e\in A$ by placing a dummy
vertex on $e$ with weight equal to the cost of $e$ according to $D$
and profit of 0.  In addition, we also reverse the orientation of each arc.
Finally, all other vertices are given weight 0 and terminals are assigned
a profit of 1 while the non-terminal vertices
of $G$ are given a profit of 0.  We create an instance $N$ of the
general, directed 1-neighbour knapsack problem consisting of $G'$
and budget bound of $C$.  By assumption, there is a solution to $N$
with cost $C$ and profit $t$.  Therefore given $N$, an
$\alpha$-approximation algorithm would produce a set of arcs whose
weight is at most $C$ and includes at least $t/\alpha$ terminals.
That is, it has a profit of at least $t/\alpha$.  Set the weights of
dummy nodes to 0 on the arcs used in the solution. Then for all
terminals included in this solution, set their profit to 0 and
repeat.  Standard set-cover analysis shows that after $O(\alpha \log
t)$ repetitions, each terminal will have been connected to the root
in at least one of the solutions.  Therefore the union of all the
arcs in these solutions has cost at most $O(\alpha \log t)\times C$
and connects all terminals to the root.
\hfill  \end{proof}
Using the above claim, we'll show that if there is an $\alpha$-approximation algorithm for the general, directed-1-neighbour problem then there is an $O(\alpha \log
t)$-approximation algorithm for \DST which implies the theorem.  Let $L$ be the total cost of the arcs in the instance of \DST.  For
each $2^i < L$, take $C=2^i$ and perform the procedure in the previous claim for $\alpha \log t$ iterations.  If after
these iterations all terminals are connected to the root then call
the cost of the resulting arcs a valid cost.  Finally, choose the
smallest valid cost, say $C'$ and $C'$ will be no more than
$2C_{\OPT}$ where $C_{\OPT}$ is the optimal cost of a solution for
the \DST instance.  By the previous claim we have a solution
whose cost is at most $2C_{\OPT} \times O(\alpha \log t)$.
\hfill  \end{proof}

\section{The uniform, directed 1-neighbour knapsack problem} \label{sec:ud1n}

In this section, we give a PTAS for the uniform, directed 1-neighbour
knapsack problem.  We rule out an FPTAS by proving the following theorem.

\begin{theorem} \label{thm:ud1n-hard}
The uniform, directed 1-neighbour problem is strongly NP-hard.
\end{theorem}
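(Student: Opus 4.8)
The plan is to give a polynomial-time reduction from \emph{Exact Cover by 3-Sets} (X3C): given a ground set $X$ with $|X|=3q$ and a collection $\mathcal{C}$ of $3$-element subsets of $X$, decide whether some $q$ of these sets are pairwise disjoint and cover $X$. Since all weights and profits in our problem equal $1$, the only number in an instance is the capacity $k$ (which we may take $\le|V(G)|$), so weak and strong NP-hardness coincide for it and it suffices to prove NP-hardness. First I would discard the degenerate case in which some element of $X$ lies in no set of $\mathcal{C}$ (then X3C is a no-instance and we output a fixed no-instance). Otherwise I build a directed dependency graph $G$: for each element $x\in X$ create a single vertex $u_x$; for each set $C\in\mathcal{C}$ create a directed cycle of length $\sigma:=3q+1$ with a distinguished head vertex $h_C$; and for every incidence $x\in C$ add the arc $u_x\to h_C$. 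Thus every cycle vertex has out-degree exactly $1$ and no other out-arc, while each $u_x$ has out-degree at least $1$. Set the capacity to $k:=q\sigma+3q=3q^2+4q$; the decision question is then whether $G$ has a $1$-neighbour set of size exactly $k$ (equivalently, a feasible solution of profit $\ge k$, since feasibility already forces profit $=|S|\le k$).

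The engine of the argument is that a directed cycle whose vertices all have out-degree $1$ is an ``all-or-nothing'' gadget: following the unique out-neighbour around the cycle shows that any $1$-neighbour set $S$ meets each set-cycle in either nothing or the whole cycle. Hence $S$ is the union of the cycles of some subfamily $\mathcal{C}_S\subseteq\mathcal{C}$ together with a subset $U_S$ of the $u_x$'s, and each $u_x\in U_S$ (having positive out-degree) must have some $h_C\in S$ with $x\in C$, so $x\in\bigcup\mathcal{C}_S$. Writing $t=|\mathcal{C}_S|$ and $\eta=|U_S|$, we get $|S|=t\sigma+\eta$ with $0\le\eta\le|\bigcup\mathcal{C}_S|\le 3q$. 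If $|S|=k$ then $(t-q)\sigma=3q-\eta$, whose right-hand side lies in $[0,3q]$; since $\sigma=3q+1>3q$ this is possible only when $t=q$ and $\eta=3q$. Then $\mathcal{C}_S$ is a family of $q$ sets whose union is all of $X$, and three elements per set forces the $q$ sets to be pairwise disjoint — an exact cover. Conversely, an exact cover $\mathcal{C}^\ast$ yields the feasible $1$-neighbour set consisting of all $u_x$ plus the $q$ cycles of $\mathcal{C}^\ast$, whose size is exactly $q\sigma+3q=k$. So the optimum equals $k$ iff $\mathcal{C}$ has an exact cover, which gives the reduction; strong NP-hardness follows because the constructed instance contains no numbers beyond the polynomially bounded $|V(G)|$ and $k$.

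The step that needs care — and the reason the gadget length is chosen as it is — is ruling out ``padding'' in the no-case. A $1$-neighbour set is a very flexible object ($V(G)$ itself is always one), so in a less careful construction an adversary could hit the capacity in a no-instance by taking one extra set-cycle and compensating by omitting a few element vertices. Choosing $\sigma$ strictly larger than $3q$ is exactly what defeats this: an extra cycle costs $\sigma>3q$, but dropping element vertices can recover at most $3q$, so no feasible solution can have $t\ne q$; and the all-or-nothing behaviour of the cycles is what prevents the adversary from trimming or inflating a gadget one vertex at a time. I expect this balancing act — tuning the gadget size and the capacity so that the unique way to exactly fill the knapsack is a genuine exact cover — to be the main obstacle, with the remaining direction and the feasibility checks being routine verification.
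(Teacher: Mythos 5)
Your proof is correct and follows essentially the same route as the paper: the paper reduces from set cover using exactly the same gadget (a long directed all-or-nothing cycle per set, element vertices with arcs into a distinguished cycle node, cycle length exceeding the number of elements, and capacity $t\cdot M + n$), with the same counting argument to pin down the number of chosen cycles. Your only deviation is using X3C instead of set cover, which lets disjointness fall out of the counting and spares you the paper's step of padding a small cover with extra cycles; otherwise the two arguments coincide.
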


\begin{proof}
The proof is a reduction from set cover.  Let the base set for an
instance be $S=\{ s_1, s_2, \ldots, s_{n}\}$ and the collection of
subsets of $S$ be ${\mathcal R}=\{R_1, R_2, \ldots, R_{m}\}$.  The maximum
number of sets desired to cover the base set is $t$.

We build an instance of the 1-neighbour knapsack problem.  Let $M =
n+1$.  The dependency graph is as follows.  For
each subset $R_i$ create a cycle $C_i$ of size $M$; the set of cycles are
pairwise vertex disjoint.  In each such cycle $C_i$ choose some node
arbitrarily and denote it by $c_i$.  For each $s_j\in S$, define a new
node in $V$ and label it $v_j$.  Define $A=\{(v_j,c_i)\; : \; s_j\in
R_i\}$.   Let the capacity of the knapsack be $k = tM+n$.

Suppose ${\mathcal R}'$ is a solution to the set-cover instance.  Since
$1\le |{\mathcal R}'|\le t$, we can define $0\le p <t$ to be such that
$|{\mathcal R}'|+p=t$.  Let ${\mathcal R}''=\{R_{i(1)}, R_{i(2)}, \ldots,
R_{i(p)}\}$ be a collection of $p$ elements of $\mathcal R$ not in ${\mathcal
R}'$. Let $G'$ be the graph induced by the union of the nodes in
$C_j$ for each $R_j\in {\mathcal R}'$ or ${\mathcal R}''$, and $\{v_1, v_2,
\ldots, v_n\}$: $G'$ consists of exactly $tM+n$ nodes.  Every vertex
in the cycles of $G'$ has out-degree 1.  Since ${\mathcal R}'$ is a set
cover, for every $s_j\in S$ there is some $R_i\in {\mathcal R}'$ where
$s_j\in R_i$ and so the arc $(v_j, c_i)$ is in $G'$.  It follows that
$G'$ is a witness for a 1-neighbour set of size $k=tM+n$.

Now suppose that the subgraph $G'$ of $G$ is a solution to the
1-neighbour knapsack instance with value $k$.  Since $M>n$, it is straightforward to
check that $G'$ must consist of a collection $\mathcal C$ of exactly $t$
cycles, say ${\mathcal C}=\{C_{a(1)}, C_{a(2)}, \ldots, C_{a(t)}\}$, and
each node $v_i$, $1\le i\le n$, along with some arc
$(v_i,c_{a(j_i)})$.  But by definition of $G$, that means that $s_i\in
R_{a(j_i)}$ for $1\le i\le n$ and so $\{ R_{a(j_1)},
R_{a(j_2)},\ldots , R_{a(j_n)}\}$ is a solution to the set cover
instance.
\end{proof}

\subsection{A PTAS for the uniform, directed 1-neighbour problem.}
Let $U$ be a 1-neighbour set.  Let $A_U$ be a minimal set of arcs of
$G$ such that for every vertex $u \in U$, $\delta_{G[A_U]}(u) \geq \min \{\delta_G(u),1\}$.  That is, $A_U$ is a
{\em witness} to the feasibility of $U$ as a 1-neighbour set.  Since
each node of $U$ in $G[A_U]$ has out-degree 0 or 1, the structure of
$A_U$ has the following form.

\begin{property}
\label{prop:structure}
Each connected component of $G[A_U]$ is a cycle $C$ and a collection
of vertex-disjoint in-arborescences, each rooted at a node of $C$.  $C$
may be trivial, i.e.,~$C$ may be a single vertex $v$, in which case
$\delta_G(v) = 0$.
\end{property}

For a strongly connected component $X$, let $c(X)$ be the size of the
shortest directed cycle in $X$ with $c(X) = 1$ if and only if $|X| = 1$.

\begin{lemma}
\label{lem:scc-structure}
There is an optimal 1-neighbour knapsack $U$ and a witness $A_U$ such that
for each non-trivial, maximal SCC $K$ of $G$, there is at most one
cycle of $A_U$ in $K$ and this cycle is a smallest cycle of $K$.
\end{lemma}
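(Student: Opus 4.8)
Since all profits and weights equal $1$, an optimal solution is just a maximum-cardinality $1$-neighbour set $U$ with $|U|\le k$, and by Property~\ref{prop:structure} any witness $A_U$ makes $G[A_U]$ a disjoint union of components, each a directed cycle with in-arborescences hanging off its cycle vertices. The plan is an extremal/exchange argument: among all pairs $(U,A_U)$ with $U$ optimal, fix one minimizing, lexicographically, first the number of non-trivial cycles of $A_U$ and then their total length. I claim such a minimizer already has the stated structure; otherwise some non-trivial maximal SCC $K$ either carries $\ge 2$ cycles of $A_U$, or carries a single cycle $C$ with $|C|>c(K)$, and in both cases I will exhibit an optimal solution with strictly smaller potential, a contradiction.

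Fix such an offending $K$ and let $W$ be the set of vertices $u\in U$ whose forward path in $G[A_U]$ (repeatedly following the unique out-arc until it enters a cycle) ever meets $V(K)$. The elementary facts to establish are: (i) $U\setminus W$ is self-supporting — the $A_U$-out-arc of any such vertex again lands in $U\setminus W$ — so $A_U$ restricted to $U\setminus W$ is still a valid witness, and its cycles are exactly the cycles of $A_U$ disjoint from $V(K)$; (ii) $W$ contains $U\cap V(K)$ and every cycle of $A_U$ that lies in $K$. I then replace $W$ by a fresh set $W^{\dagger}$ of the same size: take the vertex set of a smallest cycle $C^{*}$ of $K$ (so $|C^{*}|=c(K)$) and greedily grow, one vertex at a time, an in-forest rooted at $C^{*}$ — each new vertex must have an out-neighbour among the vertices chosen so far and must lie in the ``free'' set $F:=(V(G)\setminus U)\cup W$ — until $|W^{\dagger}|=|W|$. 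Because $C^{*}$ points around itself and every added vertex points to an earlier-chosen vertex (hence to one that reaches $C^{*}$ without passing through the new vertex), $W^{\dagger}$ induces exactly one cycle, namely $C^{*}$, of length $c(K)$ and inside $K$, and $W^{\dagger}$ is self-supporting. Since $W^{\dagger}\subseteq F$ it is disjoint from $U\setminus W$, so $U^{\dagger}:=(U\setminus W)\cup W^{\dagger}$ is a $1$-neighbour set with $|U^{\dagger}|=|U|$, hence optimal; its witness has a single cycle in $K$ of length $c(K)$ and exactly the same cycles elsewhere as $A_U$, so the potential strictly drops — the cycle count falls when $K$ had $\ge 2$ cycles, and otherwise the total length falls because $|C|>c(K)$.

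The crux I expect is verifying that the greedy growth can really attain size $|W|$ while staying in $F$, i.e.\ that every vertex of $W$ reverse-reaches $V(C^{*})$ along a path inside $F$ (equivalently $|F\cap\{v: v\text{ reverse-reaches }C^{*}\}|\ge|W|$). This is precisely where the forward-path definition of $W$ meets the strong connectivity of $K$: for $u\in W$, walk along its forward path to the first vertex $p\in V(K)$; every vertex on that initial segment also has a forward path meeting $V(K)$ and hence lies in $W\subseteq F$, and from $p$ strong connectivity of $K$ supplies a path to $V(C^{*})$ inside $V(K)\subseteq F$, since $V(K)=(V(K)\cap U)\cup(V(K)\setminus U)\subseteq W\cup(V(G)\setminus U)=F$. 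One also checks the routine bookkeeping that deleting $W$ creates no new cycle outside $K$ — every cycle of $A_U$ lies wholly inside or wholly outside $V(K)$, so the ones outside survive verbatim — and that all maps involved remain minimal witnesses. Finally, since the move never introduces a cycle outside $K$, one can equally well drop the global potential and simply process the maximal SCCs one at a time, each becoming, and then staying, ``clean'' after its turn.
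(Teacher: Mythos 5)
Your proof is correct, and while it shares the paper's basic strategy --- an exchange argument that swaps out the offending structure and swaps in a smallest cycle of the SCC together with an in-forest of equal total size --- the mechanics are genuinely different. The paper works one witness component at a time: for an offending cycle $C$ in SCC $K$ with witness component $H$, it takes a shortest path $P$ from $C$ to a smallest cycle $C'$, builds an explicit in-arborescence $T$ spanning $P\cup C\cup H$ rooted on $C'$, and uses the count $|V(T\cup C')|\ge |V(H)|+|X|$ to carve out a sub-arborescence of matching size; it then repeats the fix until all SCCs are clean, with termination left implicit. You instead remove the entire ``basin'' $W$ of vertices whose witness paths enter $V(K)$ (which handles several offending components of $A_U$ in $K$ simultaneously), regrow a same-size set greedily from $C^{*}$ anywhere in the free set $F=(V(G)\setminus U)\cup W$, and justify that the growth reaches size $|W|$ by the reverse-reachability argument $W\subseteq R$, where $R$ is the set of vertices of $F$ that reach $C^{*}$ by a path inside $F$; your lexicographic potential (number of non-trivial cycles, then total cycle length) then gives a clean global termination/extremality argument that the paper's ``repeat this procedure'' glosses over. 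Your key claims check out: $U\setminus W$ is closed under the witness out-arcs, every cycle of $A_U$ in $K$ lies in $W$ while cycles in other SCCs are disjoint from $W$, $V(K)\subseteq F$, $|W|\ge c(K)$, and the regrown set contains exactly the one cycle $C^{*}$. Two small points of polish: your parenthetical ``equivalently $|F\cap\{v: v \text{ reverse-reaches } C^{*}\}|\ge|W|$'' should insist on reverse-reachability \emph{within} $F$ (which is what your argument actually establishes and what the greedy growth needs), and the step ``$|R|\ge|W|$ implies the greedy attains size $|W|$'' deserves the one-line justification that any $v\in R\setminus W^{\dagger}$ yields, along its path in $F$ to $C^{*}$, a last vertex outside $W^{\dagger}$ whose out-neighbour is in $W^{\dagger}$. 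Neither affects correctness; your route buys a single-shot per-SCC fix and an explicit termination argument, at the cost of the extra bookkeeping around $W$ and $F$ that the paper's more local component swap avoids.
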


\begin{proof}
First we modify $A_U$ so that it contains smallest cycles of maximal
SCCs.  We rely heavily on the structure of $A_U$ guaranteed by
Property~\ref{prop:structure}.  The idea is illustrated
in Fig.~\ref{fig:1-neighbour-structure}.

Let $C$ be a cycle of $A_U$ and let $K$ be the maximal SCC of $G$
that contains $C$.  Suppose $C$ is not the smallest cycle of $K$ or
there is more than one cycle of $A_U$ in $K$.  Let $H$ be the
connected component of $A_U$ containing $C$.  Let $C'$ be a smallest
cycle of $K$.  Let $P$ be the shortest directed path from $C$ to
$C'$.  Since $C$ and $C'$ are in a common SCC, $P$ exists.  Let $T$
be an in-arborescence in $G$ spanning $P$, $C$ and $H$ rooted at a
vertex of $C'$.

Some vertices of $C' \cup P$ might already be in the
1-neighbour set $U$: let $X$ be these vertices.  Note that $X$ and $V(H)$ are
disjoint because of Property~\ref{prop:structure}.  Let $T'$ be a
sub-arborescence of $T$ such that:
\begin{itemize}
\item $T'$ has the same root as $T$, and
\item $|V(T' \cup C') \cup X| = |V(H)|+|X|$.
\end{itemize}
Since $|V(T \cup C')| = |V(P\cup H \cup C')| \geq |V(H)| + |X|$ and
$T \cup C'$ is connected, such an in-arborescence exists.

Let $B = (A_U \setminus H) \cup T' \cup C'$.  Let $B'$ be a witness
spanning $V(B)$ contained in $B$ that contains the arcs in $C'$.  We
have that $B'$ has $|U|$ vertices and contains a smallest cycle of
$K$.

We repeat this procedure for any SCC in our witness that contains
a cycle of a maximal SCC of G that is not smallest or contains two
cycles of a maximal SCC.
\hfill  \end{proof}

\begin{figure}[tb]
\centering
\subfigure[]{\includegraphics[scale=0.4]{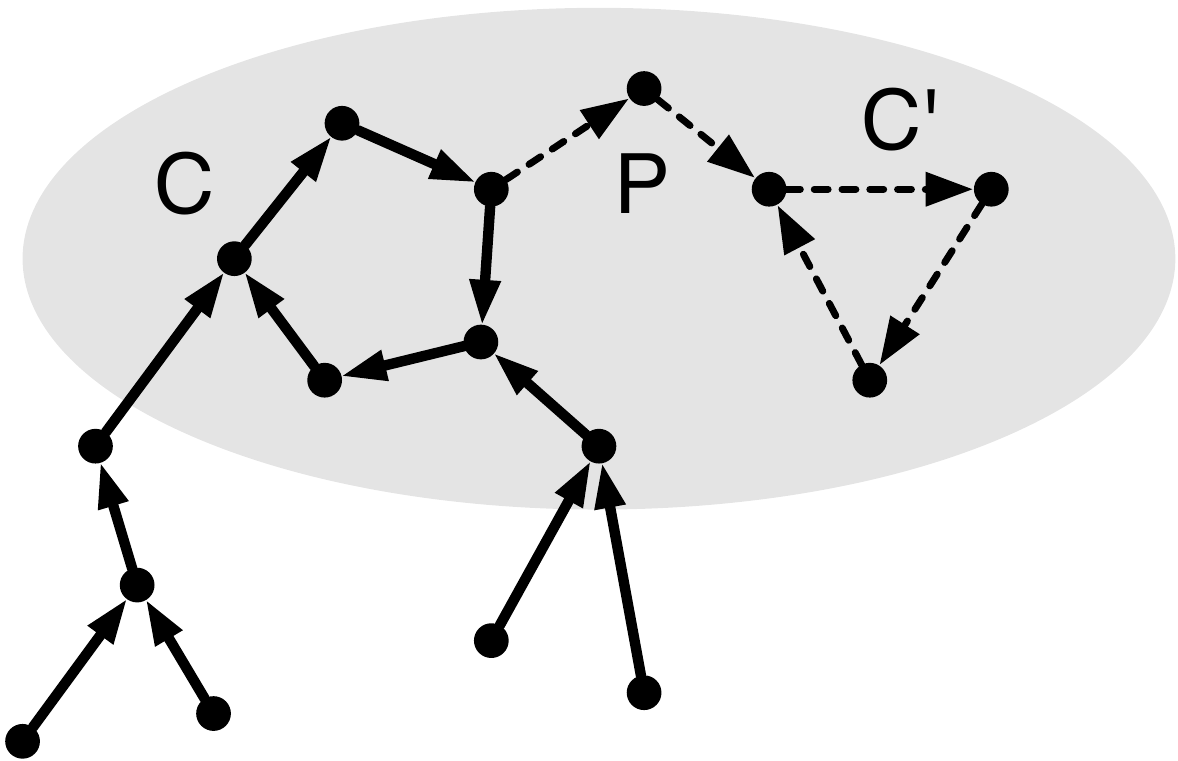}}
\subfigure[]{\includegraphics[scale=0.4]{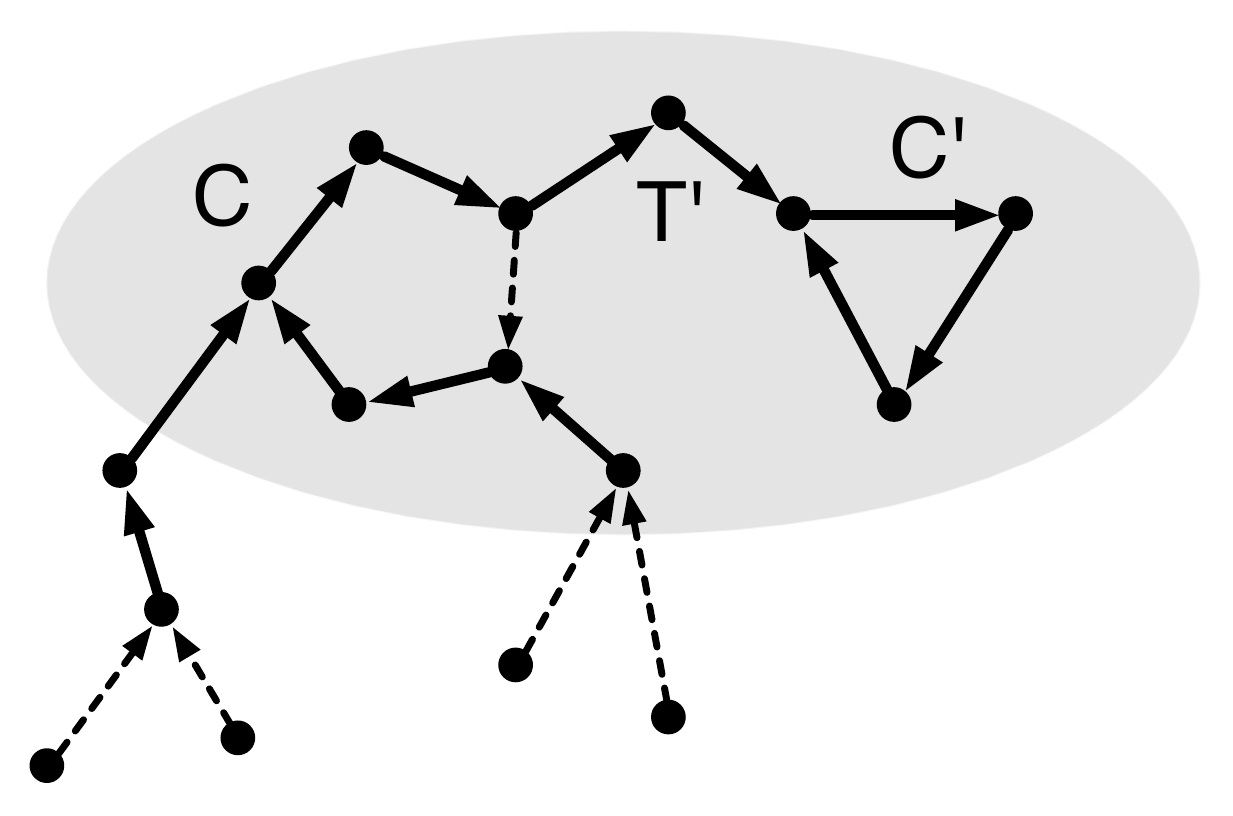}}
\caption{Construction of a witness containing the smallest cycle
of an SCC.  The shaded region highlights the vertices of an
SCC (edges not in $C$, $C'$, or $P$ are not depicted).  The edges of the witness are solid.  (a) The smallest
cycle $C'$ is not in the witness.  (b) By removing an edge from
$C$ and leaf edges from the in-arborescences rooted on $C$, we
create a witness that includes the smallest cycle $C'$.}
\label{fig:1-neighbour-structure}
\end{figure}

To describe the algorithm, let $\mathcal{D} = (S,F)$ be the DAG of maximal
SCCs of $G$ and let $\varepsilon > 1/k$ be a fixed constant where $k$ is
the knapsack bound. (If $\varepsilon \leq 1/k$
then the brute force algorithm which considers all subsets $V' \subseteq V(G)$
with $|V'| \leq k$ yields an acceptable bound for a PTAS.)

We say that $u\in S$ is {\em large} if $c(u) > \varepsilon\, k$,
{\em petite} if $1 < c(u) \leq \varepsilon\, k$, or {\em tiny} if $c(u)=1$.
Let $L$, $P$, and $T$ be the set of all large, petite and tiny SCCs
respectively.
Note that since $\varepsilon > 1/k$, for every $u \in L$, $c(u)> \varepsilon\, k >1$.

\begin{center}
\fbox{
\begin{minipage}[h]{.9 \linewidth}
\noindent {\sc uniform-directed-1-neighbour}
\begin{tabbing} \qquad $B = \emptyset$ \\
\qquad For every subset $X\subseteq L$ such that $|X| \le1/\varepsilon$\\
\qquad\qquad \= $D_X = \mathcal{D}[P \cup X]$.\\
\> $Z = \{ \mbox{tiny sinks of $\mathcal{D}$} \} \cup \{ \mbox{petite sinks of $D_X$} \}$ \\
\> $P' = $ any maximal subset of $Z$ such that $c(P') + c(X) \leq k$.\\
\> $U = \bigcup_{K \in P' \cup X}\{V(C)\ :\ C\mbox{ is a
smallest cycle of }K \}$\\
\> Greedily add vertices to $U$ such that $U$ remains a 1-neighbour \\
\> \qquad set until there are no more vertices to add or \\
\> \qquad $|U| = k$.  (Via a backwards search rooted at $U$.)\\
\> $B = \arg\max \{|B|, |U|\}$ \\
\qquad Return $B$.
\end{tabbing}
\end{minipage}
}
\end{center}

\begin{theorem} \label{thm:1-neighbour-ptas} {\sc
uniform-directed-1-neighbour} is a PTAS for the uniform, directed
1-neighbour knapsack problem.
\end{theorem}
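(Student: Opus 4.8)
The plan is to prove the two ingredients a PTAS needs: that {\sc uniform-directed-1-neighbour} runs in polynomial time for every fixed $\varepsilon$, and that the returned set $B$ satisfies $|B|\ge(1-\varepsilon)|\OPT|$. The running time is immediate: there are $O(n^{1/\varepsilon})$ subsets $X\subseteq L$ with $|X|\le 1/\varepsilon$, and each iteration — building $\mathcal{D}$, its maximal SCCs, a shortest cycle of each SCC, the sinks of $D_X$, and the backwards greedy fill — is polynomial; the $\varepsilon\le 1/k$ fallback enumerates $O(n^{1/\varepsilon})$ subsets of $V(G)$, which is also polynomial for fixed $\varepsilon$. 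So all the work is in the approximation guarantee.

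I would first fix, using Lemma~\ref{lem:scc-structure}, an optimal 1-neighbour set $\OPT$ together with a witness $A_{\OPT}$ in which every maximal SCC contains at most one cycle of $A_{\OPT}$, and that cycle is a smallest cycle of the SCC. Let $\mathcal{S}^{*}$ be the maximal SCCs of $G$ that contain a cycle of $A_{\OPT}$ — including the trivial ones, which are exactly the out-degree-$0$ vertices used by $\OPT$ — and partition $\mathcal{S}^{*}$ into large, petite and tiny SCCs $X^{*}$, $P^{*}$, $T^{*}$. For a collection $\mathcal{R}$ of SCCs write $\mathrm{cl}(\mathcal{R})$ for the set of vertices of $G$ that can reach some vertex of some SCC in $\mathcal{R}$. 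Two observations drive everything: (a) the smallest cycles of the SCCs in $X^{*}$ are vertex-disjoint, lie in $\OPT$, and each has more than $\varepsilon k$ vertices, so $|X^{*}|<1/\varepsilon$ and the algorithm \emph{does} execute the iteration $X=X^{*}$, which is the one I analyse; (b) by Property~\ref{prop:structure} every vertex of $\OPT$ reaches, along $A_{\OPT}$, the cycle of its component, hence reaches some SCC of $\mathcal{S}^{*}$, so $\OPT\subseteq\mathrm{cl}(\mathcal{S}^{*})$.

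In the iteration $X=X^{*}$, initialising $U$ to the pairwise-disjoint smallest cycles of $P'\cup X^{*}$ and repeatedly adding any vertex with an out-neighbour in $U$ keeps $U$ a 1-neighbour set, and by processing vertices in order of distance to the seed cycles (strong connectivity supplying the non-cycle vertices of each seed) one reaches all of $\mathrm{cl}(P'\cup X^{*})$; hence $|U|\ge\min\{\,k,\ |\mathrm{cl}(P'\cup X^{*})|\,\}$, and also $|U|\ge |U_{\mathrm{init}}| = c(P')+c(X^{*})$. Now case on the maximal $P'\subseteq Z$. If $P'=Z$: each $K\in P^{*}$ is a vertex of $D_{X^{*}}$ and can be followed along its arcs to a sink $\sigma(K)$, which is either a petite sink of $D_{X^{*}}$ or a large SCC of $X^{*}$, so $\sigma(K)\in Z\cup X^{*}$ and $\mathrm{cl}(K)\subseteq\mathrm{cl}(\sigma(K))$; since also $T^{*}\subseteq Z$ and $X^{*}\subseteq X^{*}$, monotonicity of $\mathrm{cl}$ gives $\mathrm{cl}(P'\cup X^{*})=\mathrm{cl}(Z\cup X^{*})\supseteq\mathrm{cl}(\mathcal{S}^{*})\supseteq\OPT$, whence $|U|\ge\min\{k,|\OPT|\}=|\OPT|$. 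If instead $P'\ne Z$, pick $K\in Z\setminus P'$; by maximality $c(P'\cup X^{*})>k-c(K)$, and $c(K)\le\varepsilon k$ whether $K$ is petite (by definition of petite) or tiny ($c(K)=1<\varepsilon k$ since $\varepsilon>1/k$), so $|U|\ge c(P')+c(X^{*})>(1-\varepsilon)k\ge(1-\varepsilon)|\OPT|$. Either way $|U|\ge(1-\varepsilon)|\OPT|$, so $|B|\ge(1-\varepsilon)|\OPT|$.

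The step I expect to be the real obstacle is the $P'=Z$ case: making the replacement $K\mapsto\sigma(K)$ of each petite (and tiny) seed of $\OPT$ precise — a sink of $D_{X^{*}}$ that both stays inside $Z\cup X^{*}$ and dominates $K$'s reachable set — and correctly combining it with the $|U|\ge\min\{k,|\mathrm{cl}(P'\cup X^{*})|\}$ characterisation of the greedy fill, together with the routine but slip-prone bookkeeping around maximality (as opposed to maximum cardinality) of $P'$, integrality of cycle sizes, and the $\le\varepsilon k$ cap on petite cycles. The remaining ingredients — the running-time bound, the $|X^{*}|<1/\varepsilon$ count, and the BFS description of the backwards fill — are straightforward once Lemma~\ref{lem:scc-structure} and Property~\ref{prop:structure} are in hand.
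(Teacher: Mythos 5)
Your proposal is correct and follows essentially the same route as the paper's proof: analyze the iteration in which $X$ equals the set of large SCCs containing cycles of the optimal witness from Lemma~\ref{lem:scc-structure}, then split on whether $P'=Z$ (every optimal vertex can reach a seed in $X\cup P'$, so the backwards greedy recovers $|U^*|$) or $P'\neq Z$ (the seed cycles alone give $c(P')+c(X)>(1-\varepsilon)k$). Your closure notation, the sink map $\sigma(K)$, and the explicit characterization of the greedy fill just make precise the same steps the paper argues more informally.
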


\begin{proof}
Let $U^*$ be an optimal 1-neighbour knapsack and let $A_{U^*}$ be
its witness as guaranteed by Lemma~\ref{lem:scc-structure}.
Let $\mathcal{L}, \mathcal{P}$, and $\mathcal{T}$ be the sets of large, petite, and tiny cycles in
$A_{U^*}$ respectively.   
By Lemma~\ref{lem:scc-structure}, each of these cycles is in a
different maximal SCC and each cycle is a smallest cycle in its
maximal SCC.

Let $\mathcal{L}=\{L_{1}, \ldots, L_{\ell} \}$ and let $L^*$ be the set of large SCCs that intersect $L_1,\ldots,
L_\ell$.  Note that $|L^*| = \ell$.  Since $k \geq |U^*| \geq
\sum_{i=1}^\ell |L_{i}| > \ell\, \varepsilon\, k$ we have $\ell < 1/\varepsilon$.
So, in some iteration of {\sc uniform-directed-1-neighbour}, $X =
L^*$.  We analyze this iteration of the algorithm.  There are two
cases:
\begin{description}
\item[$P'=Z$.] First we show that every vertex in $U^*$ has a descendant in $X \cup
P'$.  Clearly if a vertex of $U^*$ has a descendant in some $L_i \in \mathcal{L}$, it has
a descendant in $X$.  Suppose a vertex of $U^*$ has a descendant in
some $P_i \in \mathcal{P}$.  $P_i$ is within an SCC of $D_X$, and so it must have a
descendant that is in a sink of $D_X$.  Similarly, suppose a vertex of $U^{*}$
has a descendant in some $T_{i} \in \mathcal{T}$.  $T_{i}$ is either a sink in $\mathcal{D}$ or has a
descendant that is either a sink of $\mathcal{D}$ or a sink of $D_{X}$.
All these sinks are contained in $X \cup P'$. Since every vertex of $U^*$ can reach a
vertex in $X \cup P'$, greedily adding to
this set results in $|U| = |U^*|$ and the result of {\sc
uniform-directed-1-neighbour} is optimal.

\item[$P' \neq Z$.]For any sink $x \notin P'$, $c(P')+c(X)+c(x) > k$ but $c(x)
\leq \varepsilon\, k$ by the definition of tiny and petite.  So, $|U| \geq
c(P')+c(X) > (1-\varepsilon) k$, and the resulting solution is within
$(1-\varepsilon)$ of optimal.
\end{description}

The running time of {\sc uniform-directed-1-neighbour} is
$n^{O(1/\varepsilon)}$.  It is dominated by the number of iterations,
each of which can be executed in poly time. \hfill  \end{proof}

\section{The uniform, undirected 1-neighbour  problem} \label{sec:uu1n}

We now consider the final case of 1-neighbour problems, namely the
uniform, undirected 1-neighbour problem.
We note that there is a relatively straightforward
linear time algorithm for finding an optimal solution for instances of this problem.  The algorithm essentially breaks the graph into connected components and then, using a counting argument, builds an optimal solution from the components.

\begin{theorem} \label{thm:uu1n}
The uniform, undirected 1-neighbour knapsack problem has a linear-time solution.
\end{theorem}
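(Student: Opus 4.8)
The plan is to reduce the uniform, undirected 1-neighbour problem to a purely combinatorial question about the connected components of $G$. First observe that in the uniform case a feasible solution is simply a 1-neighbour set $U$ with $|U| \le k$, and we want to maximize $|U|$; since every vertex has profit and weight $1$, the only thing that matters is the size of $U$. The key structural fact is that a 1-neighbour set interacts with each connected component of $G$ essentially independently: if $C_1,\dots,C_r$ are the connected components of $G$, then $U$ is a 1-neighbour set for $G$ iff $U \cap C_i$ is a 1-neighbour set for $G[C_i]$ for each $i$. So it suffices to understand, for each connected component, the set of achievable sizes of 1-neighbour sets within it.

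The second step is to pin down, for a single connected component $C$, which sizes $s$ admit a 1-neighbour set of size exactly $s$. I claim that for a nontrivial connected component (one with at least one edge, hence at least two vertices), every $s$ with $0 \le s \le |C|$ except $s = 1$ is achievable: $s = 0$ is the empty set; $s = |C|$ is the whole component (which is trivially a 1-neighbour set since it is connected); $s = 1$ is impossible since any single vertex in a nontrivial component has a neighbour not selected; and for intermediate $2 \le s < |C|$ one can take a spanning tree of $C$, root it, and take the $s$ vertices closest to the root (a connected subtree of size $s$ containing at least one edge), which is a 1-neighbour set. For a trivial component (an isolated vertex), the achievable sizes are $0$ and $1$. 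Thus each nontrivial component contributes any amount in $\{0\} \cup \{2,3,\dots,|C|\}$ and each isolated vertex contributes $0$ or $1$.

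The third step is the ``counting argument'' alluded to in the statement: given these per-component menus, greedily fill the knapsack. Sort so we handle isolated vertices and nontrivial components appropriately; the natural approach is to first decide how many isolated vertices to include (these are totally flexible, contributing $1$ each) and to take nontrivial components greedily in decreasing order of size, taking each one fully if it fits and otherwise taking a partial amount $\ge 2$ — with the one caveat that we must never be forced to take exactly $1$ from a nontrivial component, which is easy to avoid by instead taking one fewer, i.e.\ using the isolated-vertex slack or leaving one unit of capacity unused. A short case analysis shows this produces a set of size $\min\{k, |V(G)|\}$ whenever that target is attainable, and otherwise size $\min\{k,|V(G)|\} - 1$ in the degenerate situation where the only shortfall is the forbidden value $1$; in all cases this matches the optimum, since no 1-neighbour set can have size exceeding the same bound.

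Finally, the running-time claim: computing connected components and spanning trees is $O(n+m)$, and the greedy assembly over components is $O(n)$, so the whole algorithm runs in linear time. The main obstacle — really the only subtlety — is the bookkeeping around the forbidden size $1$ for nontrivial components: one must argue that this never costs more than a single unit of knapsack capacity relative to the trivial upper bound $\min\{k,|V(G)|\}$, and that this one-unit loss is genuinely unavoidable when it occurs, so that the algorithm is exactly optimal rather than merely near-optimal. Handling this cleanly is where the care goes; everything else is routine.
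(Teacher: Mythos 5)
Your proposal is correct and follows essentially the same route as the paper: decompose into connected components, observe that a nontrivial component can contribute any size except the forbidden value $1$ via a BFS/spanning-tree prefix, greedily fill in decreasing order of component size, and patch the one-unit shortfall either with an isolated vertex or by shrinking a component of size at least $3$, with the all-components-of-size-$2$, odd-$k$ parity case being the only unavoidable loss of one. The paper's proof is organized as the same three-case analysis, so there is nothing substantive to add.
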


\begin{proof}
Let $\mathcal{G}=(\mathcal{G}_{1}, \mathcal{G}_{2}, \ldots,
\mathcal{G}_{t})$ be the connected components of the dependency graph
$G$ in {\em decreasing} order by size (we can find such an ordering in linear time).  Note that each connected
component $\mathcal{G}_{j}$ constitutes a feasible set for the
uniform, undirected 1-neighbour problem on $G$.  If $k$ is odd and
$|G_{j}|=2$ for all $j$, then the optimal solution has size $k-1$ since no vertex can
be included on its own.  In this case the first $\lfloor k/2 \rfloor$ connected
components constitutes a feasible, optimal solution.

Otherwise, let $i$ be smallest index such that $\sum_{j=1}^{i} |\mathcal{G}_{j}| > k$.  If $i=1$ then
let $\mathcal{S}=0$.  Otherwise, take $\mathcal{S}=\sum_{j=1}^{i-1}
|\mathcal{G}_{j}|$.  If $\mathcal{S}=k$ then the first $i-1$
components of $G$ have exactly $k$ nodes and constitute a feasible,
optimal solution for $G$.  Otherwise, by our choice of $i$,
$\mathcal{S}<k$ and $|\mathcal{G}_{i}|>k-\mathcal{S}$.  Let $U=(u_{1},u_{2}, \ldots, u_{|\mathcal{G}_{i}|})$ be an ordering of the nodes in
$\mathcal{G}_{i}$ given by a breadth-first search (start the search
from an arbitrary node).  Collect the first $k-\mathcal{S}$ nodes of $u$ in $U=\{u_{l} \,|\, l \leq
k-\mathcal{S}\}$.  We consider three cases:
\begin{enumerate}
\item If $|U|=1$ and $|\mathcal{G}_{t}|=1$, then the first $i-1$
connected components along with $\mathcal{G}_{t}$ constitute a
feasible, optimal solution.
\item If $|U|=1$ and $|\mathcal{G}_{t}|
\neq 1$, then $|\mathcal{G}_{1}|>2$.  If $k=1$ then return
$\emptyset$ since there is no feasible solution, otherwise drop an
appropriate node from $\mathcal{G}_{1}$ (one that keeps the rest of
$\mathcal{G}_{1}$ connected) and add $u_{2}$ to $U$ since
$|\mathcal{G}_{i}|>1$.  Now the first $i-1$ connected components
(without the one node in $\mathcal{G}_{1}$) along with $U$
constitute a feasible, optimal solution.
\item If $|U|>1$, then the
first $i-1$ connected components along with $U$ constitute a
feasible, optimal solution.
\end{enumerate}
\end{proof}

\section{The all-neighbours knapsack problem} \label{sec:all-neighbours}

In this section, we consider the all-neighbours knapsack problem.  Our
primary result is a PTAS for the uniform, directed all-neighbours
problem.  We also show that uniform, directed all-neighbours is
NP-hard in the strong sense, so no polynomial-time algorithm can yield
a better approximation unless P=NP.  In addition, we show that
uniform, undirected all-neighbours knapsack reduces to the classic
knapsack
problem.

A set of vertices $U$ is a {\em feasible} all-neighbours knapsack
solution if, for every vertex $u \in U$, $N_G(u) \subseteq U$.  Recall
that for an SCC $c \in V(\mathcal{D})$ is obtained by contracting
$V(c) \subseteq V(G)$.  For convenience, let $w(c) = w(V(c))$ and
$p(c) = p(V(c))$.  Let $\mathcal{S}=\{ \desc_\mathcal{D}(u) \,|\, u
\in V(\cal{D})\}$ be the set of descendant sets for every node of
$\mathcal{D}$.  We now show that all feasible solutions to the
all-neighbour knapsack problem can be decomposed into sets from
$\mathcal{S}$.

\begin{property} \label{prop:all-neighbours}
Every feasible solution to a general, directed all-neighbour instance has the form $\cup_{u \in Q} V(u)$ where $Q \subseteq \cal{S}$.
\end{property}

\begin{proof}
Let $U$ be a feasible solution for the dependency graph $G$.  We claim
that if $u \in U$ then there exists a set $S \in \mathcal{S}$ such
that $u \in V(S)$ and $V(S) \subseteq U$.  Notice that the
all-neighbours constraint implies that if $b$ is a neighbor of $a$ in
$G$ and $c$ is a neighbor of $b$ in $G$, then $a \in U$ implies $c \in
U$.  Thus, by transitivity, if $a \in U$ and $b$ is reachable from $a$
then $b \in U$.  Let $u \in U$ and $v$ be the node in $\mathcal{D}$
such that $u \in V(v)$.  Suppose that $w \in \desc_{\mathcal{D}}(v)$.
Then every node in $V(w)$ is reachable from $u$ in $G$ as is every
node in $V(\desc_{\mathcal{D}}(v))$ so $V(\desc_{\mathcal{D}}(v)
\subseteq U$ which proves the claim since $\desc_\mathcal{D}(v) \in
\mathcal{S}$.  The property follows.
\end{proof}

Property~\ref{prop:all-neighbours} tells us that if $U$ is a feasible solution for $G$ and $u \in U$, then every node reachable from $u$ in $G$ must also be in the optimal solution.  We use this property extensively throughout the rest of Section~\ref{sec:all-neighbours}.

\subsection{The uniform, directed all-neighbour knapsack problem}

We show that {\sc uniform-directed-all-neighbour} (below) is a
PTAS for the uniform, directed all-neighbours knapsack problem.  The
key ideas are to (a) identify a set $A$ of {\em heavy nodes} in $V(\mathcal{D})$ i.e., those nodes $v$ where $w(v)> \epsilon k$,
and then (b) augment subsets of the heavy nodes with nodes from
the set $B$ of {\em light nodes}, i.e., those nodes $v$ with $w(v)\leq
\epsilon k$.  We note that this algorithm works on the set of SCCs and
can handle the slightly more general than uniform case: that in which
the weight and profit of a vertex is equal, but different vertices may
have different weights.

\begin{center}
\fbox{
\begin{minipage}[h]{.9\linewidth}
\noindent {\sc uniform-directed-all-neighbour}
\begin{tabbing}
\qquad \=$A = \{v \in V(\mathcal{D}) \,|\, w(v) > \epsilon
k\}$, $B = S \setminus A$, $X = \emptyset$ \\
\> For every subset $A'$ of $A$ such that $|A'| \leq
1/\epsilon$ \\
\>  \qquad \=$T = \desc_\mathcal{D}(A')$ \\
\>  \> Let $B' = \{v \,|\,  v \in B \cap (V(\mathcal{D}) \setminus
T)\mbox{ and } N_{\mathcal{D}}(v) \subseteq T\}$ \\
\>  \> While $w(T) \leq k$ and $B' \neq \emptyset$ \\
\>  \> \qquad \= Add any element $b \in B'$ to $T$. \\
\>  \> \> Update $B' = \{v \,|\,  v \in B \cap (V(\mathcal{D})
\setminus T)\mbox{ and } N_\mathcal{D}(v) \subseteq T\}$ \\
\>  \> If $W(V(T)) > W(X)$ then $X = V(T)$ \\
\>  Return $X$\\
\end{tabbing}
\end{minipage}
}
\end{center}

\begin{theorem} \label{thm:uniform-directed-all}
{\sc uniform-directed-all-neighbour} is a PTAS for the uniform,
directed all-neighbour knapsack problem.
\end{theorem}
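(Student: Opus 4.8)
The plan is to follow the template of the proof of Theorem~\ref{thm:1-neighbour-ptas}, replacing the SCC-witness structure with Property~\ref{prop:all-neighbours}. By that property a feasible solution is exactly (the expansion of) a subset of $V(\mathcal{D})$ that is closed under out-neighbours in $\mathcal{D}$ --- equivalently, closed under $\desc_{\mathcal{D}}$ --- and, since $p(v)=w(v)$ for every vertex and hence $p(c)=w(c)$ for every SCC $c$, maximizing profit is the same as maximizing weight. Fix an optimal such set $U^{*}$, with $w(U^{*})=p(U^{*})=\OPT\le k$.

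First I would pin down the relevant iteration. Let $A^{*}=U^{*}\cap A$ be the heavy SCC-nodes used by the optimum. Each $v\in A$ has $w(v)>\epsilon k$ while $w(U^{*})\le k$, so $|A^{*}|<1/\epsilon$, and therefore {\sc uniform-directed-all-neighbour} processes the set $A'=A^{*}$ in some iteration; I analyze that iteration only. Since $U^{*}$ is $\desc_{\mathcal{D}}$-closed and $A^{*}\subseteq U^{*}$, the set $T_{0}:=\desc_{\mathcal{D}}(A^{*})$ satisfies $T_{0}\subseteq U^{*}$, so the iteration starts from a feasible set of weight at most $k$; moreover every set $T$ produced by the greedy light-node phase is again $\desc_{\mathcal{D}}$-closed, because a light node $b$ is added only once $N_{\mathcal{D}}(b)\subseteq T$.

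Next comes the two-case analysis. \emph{Case 1: $w(T)$ never exceeds $k$ in this iteration.} Then the while loop terminates with $B'=\emptyset$, and I claim $U^{*}\subseteq T$. Indeed $U^{*}\cap T$ is $\desc_{\mathcal{D}}$-closed (an intersection of two such sets), so if $U^{*}\setminus T\neq\emptyset$ the induced sub-DAG $\mathcal{D}[U^{*}\setminus T]$ has a sink $v$; every out-neighbour of $v$ lies in $U^{*}$ (closure) and hence in $U^{*}\cap T\subseteq T$, and $v$ is light (a heavy node of $U^{*}$ would lie in $A^{*}\subseteq T_{0}\subseteq T$), so $v\in B'$, contradicting $B'=\emptyset$. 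Hence the iteration returns a feasible solution of weight at least $w(U^{*})=\OPT$, i.e.\ an optimal one. \emph{Case 2: $w(T)$ exceeds $k$.} Let $T^{-}$ be the value of $T$ immediately before the final light node was inserted; then $T^{-}$ is feasible with $w(T^{-})\le k$, and since that last node has weight at most $\epsilon k$ we get $w(T^{-})>w(T)-\epsilon k>(1-\epsilon)k\ge(1-\epsilon)\OPT$, so (recording the heaviest feasible $T$ the iteration builds) the returned solution is within $1-\epsilon$ of optimal.

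Finally, the running time is $n^{O(1/\epsilon)}$: there are at most $|A|^{O(1/\epsilon)}=n^{O(1/\epsilon)}$ subsets $A'$ with $|A'|\le 1/\epsilon$, and each iteration computes a descendant closure and then performs at most $n$ greedy additions, all in polynomial time. I expect the crux to be Case~1 --- the clean assertion that once $B'$ is exhausted we have already swallowed all of $U^{*}$; everything hinges on the closure bookkeeping, namely that $T$, $U^{*}$, and $U^{*}\cap T$ are all $\desc_{\mathcal{D}}$-closed, which is exactly what makes the frontier sink $v$ eligible for a greedy addition. A minor secondary point is to specify which intermediate $T$ is recorded when the weight overshoots $k$; this is an innocuous bookkeeping fix and does not affect the bound.
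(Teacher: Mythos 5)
Your proof follows essentially the same route as the paper's: fix the iteration with $A'=A^{*}=U^{*}\cap A$, note $|A^{*}|\le 1/\epsilon$, and argue that the greedy light-node phase either absorbs all of $U^{*}$ (otherwise a sink of $\mathcal{D}[U^{*}\setminus T]$ would still be addable) or drives the weight above $(1-\epsilon)k$. Your write-up is in fact slightly more careful than the paper's on two points the paper leaves implicit --- the explicit sink argument for why an addable node exists, and the bookkeeping for the step where the while loop overshoots $k$ --- so it is correct and needs no changes.
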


\begin{proof}
Let $U^*$ be a set of vertices of $G$ forming an optimal solution to
the uniform, directed all-neighbours knapsack problem.  By Property~\ref{prop:all-neighbours},
there is a subset of nodes $Q^{*} \subseteq \mathcal{D}$ such that $U^* =
\cup_{u\in Q^*} V(u)$. Let $A^* = U^* \cap A$.  Since the size of
any node in $A$ is at least $\epsilon k$ and the weight of $U^*$ is
at most $k$, $|A^*| \leq 1/\epsilon$.  Since all subsets of $A$ of
size at most $1/\epsilon$ are considered in the for loop of {\sc
uniform-directed-all-neighbours}, set $A^*$ will be one such set.

Let $D^* = \desc(A^*)$.  Let $\tilde{B}$ be all the nodes of $\mathcal{D}$ added to
the solution in all iterations of the while loop.  Let $T^*=D^* \cup \tilde{B}$.

Since $A^* \subseteq U^*$, $D^* \subseteq U^*$ by
Property~\ref{prop:all-neighbours}.  Let $B^* = U^* \setminus D^*$.  $\tilde B$ and $B^*$ are not necessarily the same
set of nodes.  Suppose $\tilde B$ and $B^*$ are not the same set of
nodes and $w(T^*) < (1-\epsilon)w(U^*)$.  Then there is a node $u
\in B^* \setminus \tilde B$ such that $u$'s neighbours are in
$T^*$.  Since $w(u) < \epsilon k$, $u$ could be added to $\tilde B$,
a contradiction.

We now bound the running time of {\sc uniform-directed-all-neighbour}.
Line 1, which find the set of heavy nodes $A\subseteq V(\mathcal{D})$,
compute a simple set difference, and initialize the return value, take
at most $O(n)$ time.  Since $|A| \leq \frac{n}{\epsilon k}$ and $|A'|
\leq 1/ \epsilon$ there are at most ${\frac{n}{ \epsilon k} \choose 1/
\epsilon} \leq (n / \epsilon k)^{1/\epsilon}$ subsets of $A$
considered in line 2, so line 2 executes at most $(n / \epsilon
k)^{1/\epsilon}$ times.  Since we will never execute line 4 more than
$n$ times we have an $O(n^{1+(1/\epsilon)})$-time algorithm.
\end{proof}

\begin{theorem} \label{thm:uniform-directed-all-hard}
The uniform, directed-all-neighbour problem is NP-hard.
\end{theorem}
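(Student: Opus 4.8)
The plan is to reduce from \textsc{Subset Sum}: given positive integers $a_1,\dots,a_n$ and a target $t$, decide whether some sub-collection of the $a_i$ sums to exactly $t$. From such an instance I would build a directed graph $G$ that is the vertex-disjoint union of $n$ gadgets, where gadget $i$ is a directed cycle on $a_i$ vertices (a single isolated vertex when $a_i=1$, and when $a_i\ge 2$ a simple directed cycle $u_1^i\to u_2^i\to\cdots\to u_{a_i}^i\to u_1^i$); there are no arcs between distinct gadgets. As required by the uniform problem, every vertex gets weight $1$ and profit $1$. Finally set the knapsack capacity to $k=t$. The construction is clearly polynomial in the input size of the \textsc{Subset Sum} instance (the number of vertices is $\sum_i a_i$, which is polynomial when the $a_i$ are given in unary; if one wants the weakly-NP-hard statement with binary encodings, the same reduction from \textsc{Partition} — where $t=\tfrac12\sum a_i$ — works and there $\sum a_i$ is polynomially bounded relative to the largest $a_i$ only in the usual weak sense, which is exactly what an ``NP-hard'' claim needs).

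Next I would establish the key structural fact that the feasible all-neighbour sets of $G$ are exactly the unions of whole gadgets. Within gadget $i$ every vertex is reachable from every other vertex, so by Property~\ref{prop:all-neighbours} any feasible solution that contains one vertex of gadget $i$ must contain the descendant set of that vertex, which is all of $V(C_i)$; hence a feasible set includes either all of gadget $i$ or none of it. Conversely, since gadget $i$ has no out-arcs leaving $V(C_i)$, taking all of $V(C_i)$ forces nothing outside gadget $i$, so any union of complete gadgets is feasible. Therefore feasible solutions correspond bijectively to index sets $I\subseteq\{1,\dots,n\}$, and the profit of the solution indexed by $I$ — which, in the uniform case, is just its number of vertices — equals $\sum_{i\in I}a_i$. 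Consequently the optimal value of the uniform, directed all-neighbour knapsack instance is $\max\{\sum_{i\in I}a_i:\sum_{i\in I}a_i\le t\}$, and this maximum equals $t$ if and only if the \textsc{Subset Sum} instance is a yes-instance. A polynomial-time algorithm that computes (or even certifies) the optimal value of the uniform, directed all-neighbour knapsack problem would thus decide \textsc{Subset Sum}, proving NP-hardness; and since a feasible set of size at least $k$ is a polynomial-size, polynomially checkable certificate, the problem is in fact NP-complete.

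The only real subtlety — and the step I would write most carefully — is the feasibility characterization, i.e.\ ruling out ``partial cycle'' solutions. It is tempting to think one could include a proper subset of a cycle, but mutual reachability inside the cycle together with Property~\ref{prop:all-neighbours} forbids this; making this airtight (including the degenerate $a_i=1$ case, where the gadget is an isolated vertex with empty neighbourhood and is therefore always includable on its own) is the crux. Everything else is routine: the reduction is syntactically trivial, and once the feasible sets are pinned down to unions of gadgets the problem is literally \textsc{Subset Sum} with item values $a_1,\dots,a_n$ and bound $t$.
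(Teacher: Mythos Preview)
Your reduction has a genuine gap: it is not polynomial-time. The graph you build has $\sum_i a_i$ vertices, but a \textsc{Subset Sum} (or \textsc{Partition}) instance with binary-encoded integers has input size $\Theta(\sum_i \log a_i)$, so your output can be exponentially larger than your input. The parenthetical about unary encodings does not rescue this: \textsc{Subset Sum} in unary is in $\mathrm{P}$ (the textbook $O(nt)$ dynamic program is then genuinely polynomial), so reducing from it establishes nothing. The remark that ``$\sum a_i$ is polynomially bounded relative to the largest $a_i$ \ldots\ which is exactly what an `NP-hard' claim needs'' is simply wrong---even \emph{weak} NP-hardness requires the reduction to run in time polynomial in the encoding length of the source instance; ``weak'' refers only to the target problem having large numerical parameters. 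In your target instance every weight and profit is $1$ and $k\le n'$, so there are no large numbers to hide behind; the blow-up is entirely in the graph, and that is fatal. Indeed, the paper itself observes (Section~\ref{sec:all-neighbours}) that when the dependency graph is a disjoint union of components the all-neighbours problem \emph{is} subset sum with item sizes bounded by $n$, solvable in $O(nk)$ time---your construction produces exactly such easy instances.

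The paper avoids this by reducing from a \emph{strongly} NP-hard source, the special case of set-union knapsack (SUKP) with unit weights, unit profits, and sets of cardinality two. Because all numbers in that source are already polynomially bounded, the gadgets---one strongly connected component of $M=d+1\le m+1$ nodes per ground element, plus one node per set with arcs into the two relevant SCCs---have total size $n(d{+}1)+m$, polynomial in the SUKP input. Your feasibility analysis via Property~\ref{prop:all-neighbours} is perfectly sound; what is missing is \emph{interaction}: to get hardness you need selector nodes whose inclusion forces \emph{shared} descendants, so that the problem is no longer a disjoint subset-sum. That overlap is precisely what the SUKP-style construction supplies.
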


\begin{proof}
We reduce the set-union knapsack problem to the uniform, directed
all-neighbours knapsack problem.  An instance of SUKP consists of a
base set of elements $S=\{x_1, x_2,\ldots, x_n\}$ where each $x_i$ has
an integer weight $w_i$, a positive integer capacity $c$, a target
profit $d$, a collection $C=\{S_1,S_2,\ldots, S_m\}$ where
$S_i\subseteq S$, each subset $S_i$ has a non-negative profit $p_i$.
Then the question asked is: Does there exist a sub-collection
$C'=\{S_{i_1}, S_{i_2},\ldots, S_{i_t}\}$ of $C$ such that
$\sum_{j=1}^t p_{i_j} \geq d$ and for $T=\cup_{j=1}^t S_{i_j}$,
$\sum_{x_s\in T} w_s \leq c$.  This problem is known to be NP-hard in
the strong sense even for the case where $w_i=p_i=1$ and $|S_i|=2$ for
$1\le i\le m$~\cite{goldschmidt-etal:nrl1994}.

We consider instances of SUKP where
every subset $S_j$ in $C$ has cardinality 2 and profit $p_j=1$.
Also, each element $x_i$ has weight $w_i=1$.
Let $c$ be the capacity and $d$ be the target profit.
Given such an instance of SUKP we define next an
instance of uniform, directed all-neighbours that has a solution if and only
if the SUKP instance has a solution.

Let $G=(V,A)$ be a directed graph where for each element $x_i$ there
is a strongly connected component $scc_i$ with $M=d+1$ nodes one
of which is labeled $z_i$.
Let $U_i$ denote the set of nodes in $scc_i$.
For each subset $S_j$ there is a node $v_j\in V$.
For every $x_i\in S_j$ there is an arc $(v_j,z_i)\in A$ and
these are the only other arcs.
Let $k=cM+d$ be the target party size.
Then we claim that there is a party of size $k$ if and only
if there is a solution to the SUKP instance
having weight at most $c$ and profit at least $d$.

Suppose there is solution $P$ of size $k$ to uniform, directed all-neighbours.
Since $k=cM+d$ and $M>d$, there must be some collection $K$ of
node sets $U_i$ of strongly connected components such
that $P$ contains the union of nodes of the $U_i$'s
in $K$ where $|K|\le c$.
Hence $P$ must also contain a set $Z$ of at least $d$ nodes $v_j$.
Since $P$ is feasible solution it must be that for every $v_j\in Z$
if $x_i\in S_j$ then $U_i \in P$.
It is straightforward then to check that the collection of sets
$C'=\{S_j\; : \; v_j \in Z \}$
is a solution to the SUKP instance
with profit $d\ge |Z|$ and
since $\cup_{v_j\in Z} S_j = \{ x_i \; : \; U_i\in K\}$)
it has weight at most $c$.

Now suppose $C'=\{S_{j_1},S_{j_2},\ldots, S_{j_t}\}$ is a solution
to the SUKP instance where $t\geq d$ and
$|\cup_{r=1}^t S_{j_r}| \le c$.
Let $N=\cup_{r=1}^t S_{j_r}$ and hence $|N|\leq c$.
Arbitrarily choose some $K\subseteq C'$ where $|K|=d$.
Then take $P'=\{v_j \,|\, S_j\in K\}$.
Let $N'$ be a set of elements such that $N\subseteq N'$ and $|N'|=c$.
Define $P''=\cup_{x_i \in N'} U_i$.
Since $K\subseteq C'$, it must be for every $v_j\in K$, if
$x_i\in S_j$ then $U_i \subseteq P''$.
Therefore $P=P'\bigcup P''$ is a solution to the all-neighbours
problem where $|P|=cM+d$.
\end{proof}

\subsection{The uniform, undirected all-neighbour knapsack problem}

The problem of uniform, undirected all-neighbour knapsack is
solvable in polynomial time.
In this case we just need to find the subset of connected components of $G$
whose total size is as large as possible without exceeding $k$.
But this is exactly the subset sum problem.  Since $k \leq n$, the standard
dynamic programming algorithm yields a truly polynomial-time $O(nk)$
solution.

\subsection{The general, all-neighbour knapsack problem}

As mentioned in Section~\ref{sec:related} the general, directed, all-neighbours
knapsack problem is a generalization of the partially ordered knapsack
problem~\cite{Kolliopoulos:2007p1242} which has been shown to be hard to
approximate within a $2^{\log^\delta n}$ factor unless
3SAT$\in$DTIME$(2^{n^{3/4+\epsilon}})$~\cite{Hajiaghayi:2006p1244}.
Hence the general, directed all-neighbours knapsack problem is hard
to approximate within this factor under the same complexity assumption.

In the undirected case, i.e., the case where the dependency graph $G$
is undirected, $\mathcal{D}$ becomes a set of disjoint nodes,
one for each connected component of $G$,
and $\mathcal{S}=V(\mathcal{D})$.  By Property~\ref{prop:all-neighbours},
we are left with the problem of finding a subset of nodes $Q \subseteq
V(\mathcal{D})$ such that $p(Q)$ is maximal subject to $w(Q) \leq k$.
But this is exactly the 0-1 knapsack problem which has a well-known
FPTAS.  Thus, general, undirected all-neighbours also has an FPTAS.
Contrast this with the uniform, directed all-neighbours problem.
There, the sets in $\mathcal{S}$ are not disjoint, so we cannot use
the 0-1 knapsack ideas.

\section{Future directions}

There are several open problems to consider, including closing gaps,
improving the running times of the PTASes, and giving approximation
algorithms for the general, directed versions of both 1-neighbour and
all-neighbour. We believe that fully understanding these problems will
lead to ideas for a much more general problem: maximizing a linear
function with a submodular constraint.

\paragraph{{\bf Acknowledgments}}
We thank Anupam Gupta for helpful
discussions in showing hardness of approximation for general,
directed 1-neighbour knapsack.

\bibliographystyle{plain}
\bibliography{party-problem}

\newpage

\end{document}